\documentclass{article}

\usepackage[preprint]{neurips_2026}

\usepackage[utf8]{inputenc} 
\usepackage[T1]{fontenc}    
\usepackage{amsmath,amssymb,amsfonts}
\usepackage{amsthm}

\usepackage{graphicx}
\usepackage{algorithm}
\usepackage{algorithmic}
\usepackage{color}
\usepackage{bm}
\usepackage{xcolor}
\usepackage{multirow}
\usepackage{diagbox}
\usepackage{array,makecell,booktabs}
\usepackage{threeparttable}
\newcolumntype{M}[1]{>{\centering\arraybackslash}m{#1}}
\usepackage{adjustbox}
\usepackage[hidelinks]{hyperref}
\usepackage{listings}
\usepackage{matlab-prettifier}
\usepackage{bbding}
\usepackage{footmisc}
\usepackage{cleveref}
\usepackage{longtable}
\usepackage[misc]{ifsym}
\usepackage[T1]{fontenc}
\usepackage{todonotes}
\usepackage{enumitem}

\usepackage{subfigure}
\usepackage{wrapfig}

\newcommand{\oomit}[1]{}

\DeclareMathOperator{\argmin}{argmin}

\newtheorem{assumption}{Assumption}

\newtheorem{proposition}{Proposition}
\newtheorem{lemma}{Lemma}
\newtheorem{theorem}{Theorem}
\newtheorem{corollary}{Corollary}

\title{Cost-Aware Adaptive Conformal Inference for Runtime Assurance in Dynamic Environments}

\author{%
  Taoran Wu \\
  KLSS, Institute of Software, CAS\\
  University of Chinese Academy of Sciences\\
  Beijing, China \\
  \texttt{wutr@ios.ac.cn} \\
  \And
  Jingduo Pan \\
  KLSS, Institute of Software, CAS\\
  University of Chinese Academy of Sciences\\
  Beijing, China \\
  \texttt{panjd@ios.ac.cn} \\
  \And
  Luke Ong \\
  Nanyang Technological University\\
  Singapore \\
  \texttt{luke.ong@ntu.edu.sg} \\
   \And
  Bai Xue \thanks{Corresponding author}\\
  KLSS, Institute of Software, CAS\\
  University of Chinese Academy of Sciences\\
  Beijing, China \\
  \texttt{xuebai@ios.ac.cn} \\
}

\begin{document}

\maketitle 

\begin{abstract}
This paper addresses the problem of providing runtime assurance for systems operating online under unknown and potentially time-varying data distributions. We propose Cost-Aware Adaptive Conformal Inference (ACI), a novel framework that incorporates constraint violation costs directly into the conformal adaptation mechanism. Our key insight is that uncertainty margins should adapt not only to the frequency of constraint violations but also to their severity. We formalize this through a cost-aware loss function that couples the miscoverage indicator with violation costs. Unlike existing methods that regulate a single controlled metric, our approach provides a dual statistical guarantee: simultaneously bounding the long-run average violation frequencies (reliability) and cumulative violation cost (harm). By weighting prediction failures according to their severity, the algorithm enables the controller to respond proportionally to violation severity, expanding prediction sets aggressively when necessary while maintaining efficiency during nominal operation. We integrate Cost-Aware ACI into a robust control synthesis framework, creating a closed-loop system that balances task performance with runtime risk control without requiring explicit model knowledge. Experiments validate its effectiveness for online risk-aware controller synthesis.
\end{abstract}

\section{Introduction}
\label{sec:intro}
Deploying autonomous systems in safety-critical domains such as robotics and autonomous vehicles requires rigorous operational assurances. In formal verification, ``safety'' is typically defined as strict set invariance, which guarantees that the system never violates constraints \cite{ames2019control}. While this is a gold standard, establishing such absolute guarantees often presupposes accurate models and static environments. In real-world applications characterized by unknown dynamics and distribution shifts \cite{dixit2023adaptive}, strict invariance is often unattainable, as these factors preclude effective offline calibration. The challenge, therefore, lies in proving runtime assurance despite these uncertainties.    


To address this, we turn to the paradigm of runtime assurance and risk control. Rather than certifying total absence of failure, we aim to bound the frequency and severity of constraint violations online. Recent progress in distribution-free uncertainty quantification has positioned Conformal Prediction (CP) as a promising tool for such applications \cite{vovk2005algorithmic,lindemann2025formal}. Unlike methods that rely on strong distributional assumptions, CP provides rigorous finite-sample coverage guarantees. ACI extends CP to online, non-exchangeable settings by employing a control-theoretic update rule to adjust quantile thresholds, thereby maintaining long-run coverage under distribution shifts \cite{gibbs2021adaptive}. This has made ACI an attractive foundation for statistical risk control in dynamic environments. However, standard ACI suffers from a critical limitation we term severity-blindness. Because its update rule is driven by a binary miscoverage indicator, its response is identical for a near-miss and a catastrophic failure. The algorithm adapts to whether a violation occurred, not to its physical severity. In dynamic control systems, this is a significant risk: a series of severe violations can cause the system state to diverge faster than the algorithm can adapt. This limitation of standard ACI motivates the need for a refined framework.
We contend that runtime assurance necessitates simultaneous regulation of both statistical reliability (violation frequency) and physical harm (violation intensity).

In this paper, we introduce Cost-Aware ACI, a novel framework that integrates violation costs directly into the conformal adaptation process. Our key insight is that assurance margins should adapt not only to the rate of constraint violations but also to the operational severity of resulting constraint violations. By weighting prediction coverage failures according to their costs, our algorithm enables controllers to recover proportionally from dangerous situations, expanding prediction sets aggressively when needed while maintaining efficiency during normal operation. This creates a natural feedback mechanism where the system becomes more conservative precisely when—and only when—the physics demands it. Finally, we validate the efficacy of this framework through a series of experiments. Across all experiments, our method consistently achieves substantially lower long-run average violation costs and frequencies than standard ACI, providing a practical framework for managing risk in unknown environments. Moreover, the proposed method is robust to hyperparameter variations, underscoring its practical applicability. The contributions are summarized below:

$\bullet$ \textbf{Cost-Aware Conformal Adaptation:} We introduce a novel update rule based on a cost-aware loss function that couples miscoverage indicators with constraint-violation costs, enabling simultaneous control of both the frequency of constraint violations and their long-run average cumulative cost.

$\bullet$ \textbf{Practical Online Control Algorithm:} We develop a model-free algorithm for online risk-aware controller synthesis, termed \textit{Online Adaptive Conformal Cost Control}. This algorithm integrates our cost-aware update rule into a closed-loop synthesis process that effectively balances task performance with runtime risk control in unknown and dynamic environments.

$\bullet$ \textbf{Rigorous Theoretical Guarantees:} We provide rigorous theoretical proofs for the proposed framework. We derive explicit, non-asymptotic bounds on both the violation frequency and the average cumulative cost. Furthermore, we demonstrate the algorithm's adaptive conservatism and formalize the update rule as an online optimization procedure with sublinear regret.


\textbf{Related Work:} A wide array of control techniques falls under the umbrella of constrained control. Prominent model-based methods, including model predictive control (MPC) \cite{garcia1989model,morari1999model}, control barrier functions \cite{ames2016control,ames2019control}, funnel control \cite{bechlioulis2008robust}, and their temporal logic extensions \cite{lindemann2018control,farahani2018shrinking}, can provide strong guarantees, but usually require accurate system models and are hard to apply to complex or learning-enabled systems. Data-driven approaches like reinforcement learning \cite{ibarz2021train}, imitation learning \cite{torabi2018behavioral}, and learned controllers \cite{dean2021guaranteeing,taylor2020learning,srinivasan2020synthesis, robey2020learning,lindemann2024learning}, often perform well empirically but typically lack verifiable operational guarantees, creating a critical gap between flexibility and reliability.

CP \cite{vovk2005algorithmic,shafer2008tutorial} provides a path toward bridging this gap, which is a model-agnostic and distribution-free framework characterized by its rigorous finite-sample marginal coverage guarantee. It has recently been applied to control problems, for instance, in learning uncertainty-informed safety filters \cite{lindemann2023safe,strawn2023conformal} and predictive monitors \cite{cairoli2023conformal}, as surveyed in \cite{lindemann2025formal}. However, CP's validity hinges on the assumption of exchangeability, which is violated in non-stationary settings like time-series control. This has motivated adaptive methods for distribution shift, such as weighted conformal approaches \cite{tibshirani2019conformal,barber2023conformal}.

ACI \cite{gibbs2021adaptive} is a powerful control-theoretic solution for non-exchangeable time-series data \cite{zaffran2022adaptive,sousa2024general,szabadvary2024adaptive}. Drawing on online learning \cite{zinkevich2003online}, ACI adaptively tunes the prediction interval size to regulate the long-run average miscoverage rate to a user-specified level. Subsequent variants, including Conformal PID \cite{angelopoulos2023conformal}, Bellman Conformal Inference \cite{yang2024bellman}, and improved step-size adaptation \cite{gibbs2024conformal,podkopaev2024adaptive}, have improved its stability and responsiveness. While ACI has been integrated into critical domains such as motion planning \cite{dixit2023adaptive}, control barrier functions \cite{zhou2024safety}, safe online planning in POMDPs \cite{sheng2024safe}, and constrained reinforcement learning \cite{yao2024sonic}, its standard formulation only controls the frequency of violations, not magnitude. Besides, prior works have generalized ACI to control a user-defined average loss \cite{feldman2022achieving,lekeufack2024conformal}, but they typically treat this as a single-objective calibration problem: tuning a parameter to satisfy one aggregate bound. Our work addresses the distinct and more complex challenge of risk-aware control, where one must simultaneously regulate both the statistical reliability (violation frequency) and physical harm (violation intensity). We achieve this through a novel `Cost-Aware’ loss function, which explicitly couples the binary miscoverage error with the physical cost. 

\textbf{Notations:} $\mathbb{N}$ denotes the set of natural numbers;  $\mathbb{N}_{\geq i}$ denotes the set of natural numbers being larger than or equal to $i$; $\mathbb{R}$ denotes the set of real numbers; $\mathbb{R}^n$ denotes $n$-dimensional Euclidean space; $\mathbb{R}_{\geq 0}$ denotes the set of non-negative real numbers; $|\mathcal{A}|$ denotes the Cardinality of a set $\mathcal{A}$;  $\mathbb{I}(\cdot)$ denotes the indicator function, which evaluates to $1$ if the condition is true, and $0$ otherwise. 

\section{Preliminaries}
\label{sec:pre}
This section formalizes the problem of online risk-aware controller synthesis for systems with unknown, time-varying dynamics, and reviews ACI together with its limitations in this critical context.
\subsection{Problem Formulation}
\textbf{System Dynamics.} Consider a dynamical system described by the state update equation:
\begin{equation}
    \bm{x}_{k+1} = \bm{f}(\bm{x}_{k}, \bm{u}_k), 
    \label{eq:dynamics}
\end{equation}
where $k \in \mathbb{N}$ is the time step, $\bm{x}_k \in \mathcal{X} \subseteq \mathbb{R}^n$ is the system state, and $\bm{u}_k \in \mathcal{U} \subseteq \mathbb{R}^m$ is the control input. We operate under the following assumption.

\begin{assumption}[Unknown Dynamics]
\label{assmp:ud}
The transition function $\bm{f}: \mathcal{X} \times \mathcal{U} \to \mathbb{R}^n$ is deterministic but \textit{unknown} and may vary over time, reflecting a dynamic operational environment.
\end{assumption}

\textbf{Online Control Protocol.} The controller synthesis protocol operates under strict causal constraints. At each  step $k$, the control input $\bm{u}_k$ must be generated using only the current state $\bm{x}_k$ and historical observations $\{ (\bm{x}_i, \bm{u}_i) \}_{i=0}^{k-1}$. This online setting requires balancing between \textit{exploration} (to learn the system dynamics) and \textit{exploitation} (to achieve task objectives while maintaining assurance).

\textbf{Constraint and Cost Formulation.}
Constraint satisfaction is defined by a known scalar function $h: \mathcal{X} \to \mathbb{R}$, where the \textit{admissible set} (or nominal safe set) is its zero-superlevel set, $\mathcal{S} = \{ \bm{x} \in \mathcal{X} \mid h(\bm{x}) \ge 0 \}$. To quantify the severity of violations, we define a state-dependent cost function:
\begin{equation}
c(\bm{x}) =
\begin{cases}
    0, & \text{if } h(\bm{x}) \ge 0 \quad (\text{Within admissible set}), \\
    \rho(\bm{x}), & \text{if } h(\bm{x}) < 0 \quad (\text{Deviation from admissible set}),
\end{cases}
\label{eq:cost_function}
\end{equation}
where $\rho(\bm{x}) > 0$ is a known function that penalizes the magnitude of a constraint violation. We assume $\rho(\bm{x})$ is bounded. For theoretical and numerical stability, it is convenient to normalize this cost to the range $[0, 1]$ by scaling with its known upper bound, without loss of generality.

\textbf{Control Objectives.}
The overall objective is to optimize a task performance metric, $J_{\text{task}}$, while satisfying critical operational constraints. The metric $J_{\text{task}}$ can take various forms depending on the user-specified performance preferences. As strict, state-wise enforcement of the admissible set ($\bm{x}_k \in \mathcal{S}, \forall k$) is often infeasible in unknown and dynamic environments, we adopt a more pragmatic, two-part  risk definition that constrains both the long-run average violation cost and frequency:
\[
    \textstyle J = \limsup_{N \to \infty} \frac{1}{N} \sum_{k=1}^{N} c(\bm{x}_{k}) \leq D_{\text{tol}}, \quad
    V = \limsup_{N \to \infty} \frac{1}{N} \sum_{k=1}^{N} \mathbb{I}(c(\bm{x}_{k}) > 0) \leq \alpha, 
\]
where $D_{\text{tol}} \ge 0$ is a user-specified tolerance for the average cost and $\alpha \in (0, 1)$ is the maximum allowable violation frequency. This formulation transforms the rigid invariance requirement into two manageable ``risk budgets'', permitting sparse violations while ensuring their cumulative harm remains bounded.  Crucially, this ensures the control problem remains feasible even if the system briefly exits the admissible set, enabling the controller to minimize damage rather than failing. For many applications, this is the most practical and achievable type of runtime assurance.

\subsection{Adaptive Conformal Inference}

In this subsection, we briefly review ACI \cite{gibbs2021adaptive}, a control-theoretic extension of CP for non-exchangeable data.
Consider an online prediction setting with a sequence of covariate--target pairs $(\bm{x}_0,y_0), (\bm{x}_1,y_1), \ldots$, where $y_k \in \mathcal{Y}$. At each time step $k$, given the current covariate $\bm{x}_k$ and past observations, the goal is to construct a prediction set $\hat{C}_k\subseteq\mathcal{Y}$ for the unknown target $y_k$, targeting a long-run miscoverage rate $\alpha$.
Since exchangeability is typically violated in online settings, ACI does not fix the miscoverage level $\alpha$; instead, it maintains a time-varying parameter $\alpha_k$, updated by a gradient-like mechanism that decreases after coverage failures and increases otherwise.

At each step $k$, ACI uses a non-negative conformity score $s_k = S(\bm{x}_k, y_k; \hat{g}_{k-1})$ based on a predictive model $\hat{g}_{k-1}$. The empirical quantile function, $\hat{Q}_k(\cdot)$, is constructed from a calibration set of past scores, $\mathcal{D}_k^{\text{cal}} = \{s_i\}_{i < k}$:
\begin{equation*}
\label{eq:quantile_def}
\textstyle \hat{Q}_k(p) := \inf \big\{ s \in \mathbb{R}: \frac{1}{|\mathcal{D}_k^{\text{cal}}|} (\sum_{s_i \in \mathcal{D}_k^{\text{cal}}} \mathbb{I}(s_i \le s) ) \ge 1-p \big\},
\end{equation*}
where $p\in [0,1]$. ACI uses an adaptive internal variable, $\alpha_k$, to construct the prediction set $\hat{C}_k(1-\alpha_k)$ for a target variable $y_k$:
\begin{equation}
\label{eq:cover_comp}
\hat{C}_k(1-\alpha_k) := 
\begin{cases} 
    \mathcal{Y} & \text{if } \alpha_k \le 0, \\
    \emptyset & \text{if } \alpha_k \ge 1, \\
    \{ y \in \mathcal{Y} : S(\bm{x}_k, y; \hat{g}_{k-1}) \le \hat{Q}_k(\alpha_k) \} & \text{if } 0 < \alpha_k < 1.
\end{cases}
\end{equation}
The variable $\alpha_k$ is updated via a linear rule, starting with $\alpha_0 = \alpha$:
\begin{equation}
\label{eq:s_update}
\alpha_{k+1} := \alpha_k + \gamma (\alpha - e_k),
\end{equation}
where $e_k = \mathbb{I}(y_k \notin \hat{C}_k(1-\alpha_k))$ is the miscoverage indicator and $\gamma > 0$ is a step-size. This update ensures a powerful long-run average miscoverage guarantee.

\begin{proposition}[Long-run Average Miscoverage \cite{gibbs2021adaptive}]
\label{prop:alrc}
For the update rule in \eqref{eq:s_update}, the cumulative miscoverage rate converges to the target level $\alpha$, i.e., $\lim_{T\to\infty} \frac{1}{T} \sum_{k=0}^{T-1} e_k = \alpha$.
\end{proposition}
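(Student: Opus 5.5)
The plan is the standard telescoping-plus-boundedness argument of \cite{gibbs2021adaptive}. Summing the update rule \eqref{eq:s_update} over $k=0,\dots,T-1$ telescopes to
\[
\alpha_T - \alpha_0 = \gamma \sum_{k=0}^{T-1} (\alpha - e_k),
\qquad\text{so}\qquad
\Bigl| \frac{1}{T}\sum_{k=0}^{T-1} e_k - \alpha \Bigr| = \frac{|\alpha_T - \alpha_0|}{\gamma T}.
\]
It therefore suffices to show that the sequence $\{\alpha_k\}$ stays in a bounded interval independent of $T$. Concretely, I aim for $\alpha_k \in [-\gamma, 1+\gamma]$ for all $k$. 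This yields the explicit non-asymptotic bound
\[
\Bigl| \frac{1}{T}\sum_{k=0}^{T-1} e_k - \alpha \Bigr| \le \frac{\max\{\alpha_0,\,1-\alpha_0\}+\gamma}{\gamma T},
\]
which tends to $0$ and gives the claimed limit.

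The boundedness I would prove by induction on $k$, using the structure of the prediction set \eqref{eq:cover_comp}. The base case holds because $\alpha_0=\alpha\in(0,1)$. For the inductive step there are three cases.

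\textbf{Case $\alpha_k \ge 1$.} Then $\hat C_k=\emptyset$, so $e_k=1$ and $\alpha_{k+1}=\alpha_k-\gamma(1-\alpha)<\alpha_k$. The variable moves down, and it stays at least $1-\gamma(1-\alpha) > -\gamma$, so it does not drop out of the interval from above.

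\textbf{Case $\alpha_k \le 0$.} Then $\hat C_k=\mathcal{Y}$, so $e_k=0$ and $\alpha_{k+1}=\alpha_k+\gamma\alpha>\alpha_k$. The variable moves up, and the same reasoning keeps it inside the interval from below.

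\textbf{Case $\alpha_k\in(0,1)$.} A single step changes $\alpha$ by at most $\gamma\max\{\alpha,1-\alpha\}\le\gamma$, so $\alpha_{k+1}\in(-\gamma,1+\gamma)$.

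Combining the cases, once the iterate is outside $[0,1]$ it is pushed monotonically back toward $[0,1]$. Hence it can never exceed $1+\gamma$ or fall below $-\gamma$.

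The only real subtlety is this boundedness step. Note that it needs no distributional assumption at all, not even exchangeability: it relies purely on the convention in \eqref{eq:cover_comp} that $\alpha_k\ge1$ forces miscoverage and $\alpha_k\le0$ forces coverage. I would check carefully that the induction handles iterates already outside $[0,1]$ but inside the enlarged interval, since those must still satisfy the invariant at the next step. The remaining steps are routine algebra. The result holds deterministically for any sequence of data, and in particular almost surely.
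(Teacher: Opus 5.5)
Your proposal is correct and is essentially the standard telescoping-plus-boundedness argument of Gibbs and Cand\`es. The paper does not reprove this cited result, but it uses the same structure for its generalization (Lemmas~\ref{lemma:boundedness} and~\ref{lemma:convergence}, which reduce to this proposition at $\beta=0$); there boundedness is shown by a first-exit-time contradiction rather than your forward induction, and at $\beta=0$ the interval $[-\gamma(1-\alpha),\,1+\gamma\alpha]$ is slightly tighter than your $[-\gamma,1+\gamma]$, which changes only the constant in your $O(1/T)$ bound.
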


However, while ACI can provide a powerful tool to enforce the frequency constraint $V\leq \alpha$ as shown in Theorem \ref{thm:validity} below, its standard form is insufficient for our dual-objective risk control problem. The core issue is ``severity-blindness'': the binary nature of $e_k$ means the system's response is identical for a near-miss and a catastrophic failure. This decoupling from the physical cost $c(\bm{x}_k)$ means ACI makes no effort to satisfy the cost constraint $J\leq D_{\text{tol}}$. Although prior work generalized ACI to control a single user-defined loss \cite{feldman2022achieving}, our problem requires simultaneously regulating both constraint violation frequency and cost. This critical gap necessitates a new, cost-aware framework.

\section{Adaptive Conformal Cost Control}
\label{sec:method}
This section introduces Cost-Aware ACI, a framework designed to overcome the severity-blindness of standard ACI in runtime assurance. It explicitly integrates the cost of constraint violations into the adaptive mechanism, ensuring that uncertainty bounds adapt not only to a miscoverage event but also in proportion to its severity. We embed this framework into an online algorithm to synthesize a controller with dual statistical guarantees. It simultaneously bounds the long-run constraint violation frequency and average cumulative cost, facilitating rapid recovery from dangerous transients.

\subsection{Cost-Aware ACI}

We define a nonconformity score that quantifies prediction uncertainty in terms of constraint satisfaction. Let $\Phi: \mathcal{X} \times \mathcal{U} \rightarrow \mathbb{R}$ be the composite function that maps a state-action pair to the assurance value of the resulting state, i.e., $\Phi(\bm{x}, \bm{u}) := h(\bm{f}(\bm{x}, \bm{u}))$.
We learn a surrogate model $\widehat{h}(\bm{x}, \bm{u}) \approx \Phi(\bm{x}, \bm{u})$ from online data. The prediction error at time $k$, which results from applying control input $\bm{u}_{k-1}$ at the previous state $\bm{x}_{k-1}$, is the absolute residual:
\begin{equation}
    s_k := \left| \Phi(\bm{x}_{k-1}, \bm{u}_{k-1}) - \widehat{h}(\bm{x}_{k-1}, \bm{u}_{k-1}) \right| = \left| h(\bm{x}_{k}) - \widehat{h}(\bm{x}_{k-1}, \bm{u}_{k-1}) \right|,
    \label{eq:safety_score}
\end{equation}
where $k \in \mathbb{N}_{\geq 1}$. This residual, $s_k$, serves as our nonconformity score, measuring the discrepancy between predicted and realized runtime assurance. A large score indicates a significant prediction error that could compromise operational reliability. To guarantee the well-posedness of the online controllers synthesis optimization (\eqref{eq:robust_constraint} in Section \ref{sub:scs}), the prediction residuals must remain finite. Specifically, unbounded residuals could lead to infinitely large uncertainty margins, rendering the robust constraint unsatisfiable. This assumption is mild and practically justified for physical control systems. Since system states and control inputs are subject to inherent physical limits, both the true assurance function and its learned surrogate are naturally bounded, guaranteeing finite residuals. 
\begin{assumption}[Bounded Nonconformity]
\label{assumption0}
There exists a constant $M > 0$ such that $s_k \le M$ for all $k \in \mathbb{N}$.
\end{assumption}


Building on the nonconformity score, we define a loss function that incorporates the cost of a constraint violation. First, we define the empirical $(1-\delta)$-quantile function, $\hat{Q}_k$, of the historical scores $\mathcal{C} = \{s_1, \dots, s_{k-1}\}$ and a value $\epsilon > 0$:
\begin{equation*}
    \hat{Q}_k(\delta) := 
\begin{cases}
M, & \text{if~}\delta < 0,\\
-\epsilon, & \text{if~}\delta > 1,\\
\inf \left\{ c \in \mathbb{R} : \frac{1}{|\mathcal{C}|} \sum_{r \in \mathcal{C}} \mathbb{I}(r \le c) \ge 1-\delta \right\}, & \text{if~}0\leq\delta\leq 1.
\end{cases}
\end{equation*}
Standard ACI uses a binary miscoverage indicator, $e_k = \mathbb{I}(s_k > \hat{Q}_k(\delta_k))$, as feedback. To introduce severity awareness, we augment this indicator with the constraint violation cost, $c(\bm{x}_k)$, defining the \textbf{boosted loss}:
\begin{equation} \label{eq:boosted_loss}
    L_k := e_k \cdot \left( 1 + \beta c(\bm{x}_{k}) \right).
\end{equation}
Equivalently, since $c(\bm{x}_{k})= 0$ when $h(\bm{x}_{k}) \ge  0$ and $c(\bm{x}_{k}) = \rho(\bm{x}_{k})>0$ when $h(\bm{x}_{k})<0$, we have
\begin{equation*}
    L_k = 
    \begin{cases} 
        0 & \text{if } s_k \le \hat{Q}_{k}(\delta_k), \\
        1 & \text{if } s_k > \hat{Q}_{k}(\delta_k) \text{ and } h(\bm{x}_{k}) \ge 0, \\
        1 + \beta \rho(\bm{x}_{k}) & \text{if } s_k > \hat{Q}_{k}(\delta_k) \text{ and } h(\bm{x}_{k}) < 0.
    \end{cases}
\end{equation*}
The parameter $\beta \ge 0$ is a severity-sensitivity parameter that controls the trade-off between frequency and severity of constraint violations:

$\bullet ~ \beta = 0$: Recovers standard ACI, where all failures are weighted equally regardless of severity.

$\bullet ~ \beta > 0$: Introduces cost awareness. A miscoverage event that causes constraint violations ($h(\bm{x}_k)<0$) is penalized more heavily, in proportion to $\rho(\bm{x}_k)$.


With the boosted loss defined, we introduce the update mechanism for the adaptive miscoverage level $\delta_{k+1}$. This parameter is adjusted via online gradient descent on the loss $L_k$:
\begin{equation} \label{eq:update_rule}
\delta_{k+1} = \delta_{k} + \gamma (\alpha - L_k),
\end{equation}
where $\alpha \in (0,1)$ is the target risk budget and $\gamma > 0$ is the learning rate. This update rule creates three distinct operational regimes based on the outcome at time step $k$:

\textbf{1. Within Admissible Set ($L_k = 0$)}: When the prediction is accurate ($s_k \le \hat{Q}_k$), the loss is zero. The change in $\delta$, denoted $\Delta\delta_{k+1} = \delta_{k+1} - \delta_k$, is positive: $\Delta\delta_{k+1} = \gamma\alpha$. This gradually increases $\delta_k$, tightening the uncertainty bounds to improve control performance.

\textbf{2. Benign Failure ($L_k = 1$)}: When a miscoverage event occurs but the system remains within the admissible set ($h(\bm{x}_k) \ge 0$), the loss is 1. The change is negative: $\Delta\delta_{k+1} = -\gamma(1-\alpha)$. This penalizes the prediction error by making the uncertainty bounds more conservative.

\textbf{3. Critical Failure ($L_k > 1$)}: When miscoverage coincides with a constraint violation ($h(\bm{x}_k) < 0$), the loss is $1 + \beta \rho(\bm{x}_k)$. The change becomes sharply negative: $\Delta\delta_{k+1} = -\gamma(1 - \alpha + \beta \rho(\bm{x}_k))$. The decrease is now proportional to the violation's severity, triggering a strong corrective response.

The key innovation is this cost-sensitive adaptation during critical failures. While standard ACI uses a fixed correction step, Cost-Aware ACI employs a cost-scaled update. This enables proportional recovery: severe violations trigger stronger corrections, reducing the risk of prolonged violations.


\subsection{Run-Time Assurance Control Synthesis}
\label{sub:scs}

Having established the adaptive uncertainty quantification, we now integrate it into online control synthesis. At each step $k$, the newly calibrated parameters ensure the next action $\bm{u}_k$ is admissible. The calibrated quantile $\hat{Q}_{k+1}(\delta_{k+1})$ provides a robust lower bound on the true assurance value:
\begin{equation*}
\Phi(\bm{x}_k, \bm{u}_k) \geq \widehat{h}(\bm{x}_k, \bm{u}_k) - \hat{Q}_{k+1}(\delta_{k+1}).
\end{equation*}
To provide run-time assurance, we require the predicted assurance value to be non-negative, leading to the robust constraint \(\widehat{h}(\bm{x}_k, \bm{u}_k) \geq \hat{Q}_{k+1}(\delta_{k+1}).\) Consequently, at each time step $k$, we determine the optimal control input $\bm{u}^*_k$ by solving the following MPC optimization problem:
\begin{equation}
\begin{split}
(\bm{u}^*_k, \{\bm{u}^{**}_i\}_{i=k+1}^{k+N-1})&=\argmin_{\{\bm{u}_i\}_{i=k}^{k+N-1}} J_{\text{task}}(\{\bm{x}_i\}_{i=k+1}^{k+N},\{\bm{u}_i\}_{i=k}^{k+N-1})\\
s.t. &\begin{cases}
\widehat{h}(\bm{x}_k, \bm{u}_k) \geq \hat{Q}_{k+1},\\
\bm{x}_{i+1}=\widehat{\bm{f}}(\bm{x}_i,\bm{u}_i),i=k,\ldots,k+N-1.
\end{cases}
\end{split}
\label{eq:robust_constraint}
\end{equation}
Here, $\widehat{\bm{f}}: \mathcal{X}\times \mathcal{U}\rightarrow \mathbb{R}^n$ serves as a surrogate model for the transition function $\bm{f}: \mathcal{X}\times \mathcal{U}\rightarrow \mathbb{R}^n$. It is worth noting that we employ a separate estimator $\widehat{h}$ to approximate $\Phi$, rather than relying on the composition $h(\widehat{\bm{f}}(\cdot))$. This design decouples assurance prediction errors from state dynamics errors. By directly calibrating the scalar residual of $\widehat{h}$, we would avoid the excessive conservatism often caused by propagating high-dimensional state uncertainty through nonlinear functions. The  constraint $\widehat{h}(\bm{x}_k, \bm{u}_k) \geq \hat{Q}_{k+1}$ enforces a reliability level that is dynamically adjusted by the uncertainty margin $\hat{Q}_{k+1}$. Meanwhile, state evolution within the planning horizon is projected using the surrogate dynamics model $\widehat{\bm{f}}: \mathcal{X}\times \mathcal{U}\rightarrow \mathbb{R}^n$. Both surrogate models, $\widehat{h}$ and $\widehat{\bm{f}}$, can be continuously updated online to accurately reflect the current environment.

This optimization establishes a critical feedback loop that balances task performance, prediction reliability, and control conservatism. By adaptively tightening the  margin $\hat{Q}_{k+1}$ only when necessary, the system secures reliability during critical environmental shifts while minimizing conservatism during stable operation to maximize task performance.


The complete online adaptive conformal cost control algorithm integrates these components into a cohesive online control framework, outlined in Algorithm \ref{alg:risk_sensitive_ACI}, with further details in Appendix~\ref{sec:app_alg}.





\begin{algorithm}[!h]
\caption{Online Adaptive Conformal Cost Control Algorithm}
\label{alg:risk_sensitive_ACI}
\begin{algorithmic}[1]
   \REQUIRE Target risk budget $\alpha$, learning rate $\gamma$, sensitivity $\beta$;
   \STATE Initialize $\delta_1 = \alpha$, $\mathcal{C} = \{\}$, models $\widehat{h}$ and $\widehat{\bm{f}}$; Observe initial state $\bm{x}_0$;
   \FOR{$k=0,1, \ldots$}
       \IF{$k > 0$}
           \STATE \textbf{Scoring:} $s_k = |h(\bm{x}_k) - \widehat{h}(\bm{x}_{k-1}, \bm{u}_{k-1}^*)|$;
           \STATE \textbf{Severity Evaluation:} $e_k = \mathbb{I}(s_k > \hat{Q}_{k}(\delta_k))$;
           \STATE \hspace{3.08cm} $c(\bm{x}_k) = \rho(\bm{x}_k) \cdot \mathbb{I}(h(\bm{x}_k) < 0)$;
           \STATE \hspace{3.08cm} $L_k = e_k(1 + \beta c(\bm{x}_k))$;
           \STATE \textbf{Update:} $\delta_{k+1} = \delta_{k} + \gamma(\alpha - L_k)$;
           \STATE \hspace{1.27cm} $\mathcal{C} \leftarrow \mathcal{C} \cup \{s_k\}$;
           \STATE \hspace{1.27cm} Update $\widehat{h}$ and $\widehat{\bm{f}}$ (online learning);
       \ENDIF
       \STATE \textbf{Calibration:} $\hat{Q}_{k+1} \leftarrow \text{Quantile}(\{-\epsilon,M\}, 1-\delta_{k+1})$ if $k=0$, else $\text{Quantile}(\mathcal{C}, 1-\delta_{k+1})$;
       \STATE \textbf{Control Synthesis:} Solve optimization \eqref{eq:robust_constraint} using $\hat{Q}_{k+1}$ to obtain $\bm{u}_k^*$;
       \STATE Apply $\bm{u}_k^*$ and observe $\bm{x}_{k+1}$;
   \ENDFOR
\end{algorithmic}
\end{algorithm}

\section{Theoretical Analysis}
\label{sec:analysis}

This section establishes the theoretical properties of the proposed algorithm. We first introduce mild structural assumptions, then prove the stability of the algorithm by showing that its adaptive parameter remains bounded (Lemma \ref{lemma:boundedness}). Third, we establish its core convergence property: the long-run average boosted loss converges to the target level $\alpha$ (Lemma \ref{lemma:convergence}). Building on this convergence, we derive our main performance guarantees: rigorous upper bounds on both long-run constraint violation frequency (Theorem \ref{thm:validity}) and average cumulative constraint violation cost (Theorem \ref{thm:cost-bound}). Finally, we analyze the algorithm's adaptive nature, showing it dynamically adjusts to environmental hostility (Proposition~\ref{thm:conservation}) and corresponds to a principled online optimization procedure (Proposition~\ref{thm:regret}). All proofs are shown in Appendix A. We begin by stating the assumptions required for our analysis.

\begin{assumption}[Bounded Constraint Violation Costs]
\label{assumption:cost}
The constraint violation cost is normalized and bounded, i.e., $0 \leq \rho(\bm{x}) \leq 1$ for all $\bm{x} \in \mathcal{X}$. Consequently, the boosted loss is bounded by $L_{\max} = 1 + \beta$.
\end{assumption}

\begin{assumption}[Feasibility of the Robust Constraint]
\label{assumption:feasibility}
For every state $\bm{x}$ reachable by the system, optimization \eqref{eq:robust_constraint} is feasible. That is, the set of control inputs satisfying the constraint $\widehat{h}(\bm{x}, \bm{u}) \geq \hat{Q}_{k+1}$ is non-empty. 
\end{assumption}

Assumption \ref{assumption:cost} is a practical condition reflecting that physical damage is typically finite. Assumption \ref{assumption:feasibility} is a standard condition ensuring the control synthesis is well-defined, similar to its use in MPC \cite{scokaert2002suboptimal}. In practice, the feasibility required by Assumption \ref{assumption:feasibility} is structurally achieved by the availability of a robust recovery policy (e.g., emergency braking or hovering). This hierarchical architecture, which prioritizes a high-performance optimizer while retaining a conservative backup, parallels standard shielding and recovery mechanisms in Safe Reinforcement Learning \cite{alshiekh2018safe,thananjeyan2021recovery}. Since this recovery policy is designed to satisfy constraints under worst-case uncertainty, its presence ensures that the feasible set for the optimization \eqref{eq:robust_constraint} is never empty. While this guarantees solvability, our Cost-Aware algorithm is designed to proactively adjust margins to prioritize high-performance actions, thereby minimizing the frequency with which the controller is forced to select this conservative recovery option. With these assumptions, we first prove that the adaptive parameter $\delta_k$ remains stable.

\begin{lemma}[Parameter Boundedness]
\label{lemma:boundedness}
Let $\{\delta_k\}$ be generated by the update rule \eqref{eq:update_rule}. Under Assumptions \ref{assumption:cost} and \ref{assumption:feasibility}, if $\delta_1$ is initialized within the interval $[-\gamma(L_{\max} - \alpha), 1 + \gamma\alpha]$, then $\delta_k$ remains within this interval for all $k \in \mathbb{N}$.
\end{lemma}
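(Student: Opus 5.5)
We argue by induction on $k$, using the definition of $\hat{Q}_k$ at the boundary values of $\delta$. The base case is the hypothesis on $\delta_1$. For the inductive step, assume $\delta_k \in [-\gamma(L_{\max}-\alpha),\, 1+\gamma\alpha]$ and split into three regimes according to where $\delta_k$ lies. By Assumption~\ref{assumption:cost}, the boosted loss satisfies $0 \le L_k \le L_{\max}=1+\beta$. Hence the increment $\gamma(\alpha - L_k)$ always lies in $[-\gamma(L_{\max}-\alpha),\, \gamma\alpha]$.

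\emph{Case $\delta_k > 1$.} Here $\hat{Q}_k(\delta_k) = -\epsilon < 0 \le s_k$, so $e_k = 1$ and $L_k \ge 1 > \alpha$. The update therefore strictly decreases $\delta$, giving $\delta_{k+1} < \delta_k \le 1+\gamma\alpha$. For the lower bound, $\delta_{k+1} \ge \delta_k - \gamma(L_{\max}-\alpha) > 1-\gamma(L_{\max}-\alpha) \ge -\gamma(L_{\max}-\alpha)$.

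\emph{Case $\delta_k < 0$.} Here $\hat{Q}_k(\delta_k) = M \ge s_k$ by Assumption~\ref{assumption0}, so $e_k=0$, $L_k=0$, and $\delta_{k+1} = \delta_k + \gamma\alpha$. This exceeds $\delta_k \ge -\gamma(L_{\max}-\alpha)$ and is less than $\gamma\alpha \le 1+\gamma\alpha$.

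\emph{Case $0\le\delta_k\le 1$.} Here $\delta_{k+1} \le \delta_k+\gamma\alpha \le 1+\gamma\alpha$ and $\delta_{k+1} \ge \delta_k - \gamma(L_{\max}-\alpha) \ge -\gamma(L_{\max}-\alpha)$.

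Together the three cases close the induction. Assumption~\ref{assumption:feasibility} enters only to guarantee that the algorithm is well defined at every step, so that $\bm{x}_k$, $s_k$ and $L_k$ exist for all $k$.

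The main obstacle is the boundary bookkeeping in the definition of $\hat{Q}_k$. First, the case $\delta_k<0$ relies on the convention $\hat{Q}_k(\delta)=M$ together with Assumption~\ref{assumption0}, so that no miscoverage is possible. Second, the case $\delta_k>1$ relies on the convention $-\epsilon<0\le s_k$, so that miscoverage is forced. Finally, the first step $k=1$ uses the special calibration set $\{-\epsilon, M\}$ from Algorithm~\ref{alg:risk_sensitive_ACI}. I would check this step separately: whatever $\hat{Q}_1$ turns out to be, the generic bounds on the increment from the third case still apply, and the forced behaviour at the extremes still holds because the calibration quantile of $\{-\epsilon,M\}$ equals $M$ for $\delta$ near $0$ and $-\epsilon$ for $\delta$ near $1$.
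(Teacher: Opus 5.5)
Your proof is correct and follows essentially the same route as the paper's. Both split on whether $\delta_k<0$, where $\hat{Q}_k=M$ forces coverage and $L_k=0$; whether $\delta_k>1$, where $\hat{Q}_k=-\epsilon$ forces miscoverage and $L_k\ge 1>\alpha$; or whether $\delta_k\in[0,1]$, where the bounded increment suffices. The paper phrases this as a minimal-counterexample argument rather than an induction. Your justification of $s_k\le M$ via Assumption~\ref{assumption0} is actually the sound one, since the paper appeals to feasibility there and that alone does not give $s_k\le M$.
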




This stability is crucial, as it prevents erratic behavior and is a prerequisite for proving convergence. We now prove the fundamental convergence property of Cost-Aware ACI: the long-run average cumulative boosted loss converges exactly to the target risk budget $\alpha$.

\begin{lemma}[Long-Run Average Boosted Loss Convergence]
\label{lemma:convergence}
Under the same conditions as Lemma \ref{lemma:boundedness}, the long-run time-average of the boosted loss converges to the target level $\alpha$, i.e., $\lim_{T \to \infty} \frac{1}{T} \sum_{k=1}^{T} L_k = \alpha$.
\end{lemma}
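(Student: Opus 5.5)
The plan is the standard telescoping argument of \cite{gibbs2021adaptive}, using Lemma~\ref{lemma:boundedness} in place of the boundedness of $\alpha_k$. Summing the update rule \eqref{eq:update_rule} from $k=1$ to $T$ telescopes to
\begin{equation*}
\delta_{T+1} - \delta_1 = \gamma \sum_{k=1}^{T} (\alpha - L_k),
\end{equation*}
and dividing by $\gamma T$ gives
\begin{equation*}
\alpha - \frac{1}{T}\sum_{k=1}^{T} L_k = \frac{\delta_{T+1}-\delta_1}{\gamma T}.
\end{equation*}

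By Lemma~\ref{lemma:boundedness}, both $\delta_1$ and $\delta_{T+1}$ lie in $[-\gamma(L_{\max}-\alpha),\, 1+\gamma\alpha]$. Hence $|\delta_{T+1}-\delta_1|$ is at most the interval length $1+\gamma L_{\max} = 1+\gamma(1+\beta)$, using Assumption~\ref{assumption:cost}. This yields the non-asymptotic bound
\begin{equation*}
\Bigl|\frac{1}{T}\sum_{k=1}^{T} L_k - \alpha\Bigr| \le \frac{1+\gamma(1+\beta)}{\gamma T},
\end{equation*}
which tends to $0$ as $T\to\infty$ and proves the claim. The rate is $O(1/T)$ and holds deterministically for any data sequence.

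The only substantive step is therefore the boundedness, already supplied by Lemma~\ref{lemma:boundedness}. That lemma relies on the clipping conventions $\hat{Q}_k(\delta)=M$ for $\delta<0$, which forces $e_k=0$ by Assumption~\ref{assumption0}, and $\hat{Q}_k(\delta)=-\epsilon$ for $\delta>1$, which forces $e_k=1$. These conventions push $\delta_k$ back toward the interval, and the one-step change is bounded by $\gamma\max(\alpha, L_{\max}-\alpha)$. Since that lemma may be assumed here, no real obstacle remains beyond verifying that the interval width is finite.
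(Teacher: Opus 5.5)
Your proposal is correct and matches the paper's proof: both telescope the update rule to $\frac{1}{T}\sum_{k=1}^{T} L_k - \alpha = \frac{\delta_1-\delta_{T+1}}{T\gamma}$, then use Lemma~\ref{lemma:boundedness} to bound $|\delta_1-\delta_{T+1}|$ by the interval width $1+\gamma L_{\max}$, which is the paper's constant $B$. Your side remark that $\delta_k<0$ forces $e_k=0$ directly via Assumption~\ref{assumption0} is a cleaner justification than the feasibility argument the paper gives inside Lemma~\ref{lemma:boundedness}, but this lemma does not need it.
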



This convergence of the boosted loss is the key to unlocking our main theoretical results: explicit upper bounds on the long-run constraint violation frequency $V$ and average cost $J$.

\begin{theorem}[Upper Bound on Constraint Violation Frequency]
\label{thm:validity}
Let $v_k:=\mathbb{I}(h(\bm{x}_k) < 0)$ be the indicator of a constraint violation at time $k$. Under Assumptions \ref{assumption0} and \ref{assumption:cost}, the  violation frequency $V$ is upper-bounded by $\alpha$, i.e., $V = \limsup_{T \to \infty} \frac{1}{T} \sum_{k=1}^{T} v_k \le \alpha$.
\end{theorem}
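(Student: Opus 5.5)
The plan is a pointwise domination argument: show that every constraint violation is also a miscoverage event, so $v_k \le e_k \le L_k$ for every $k \ge 1$. The long-run bound then follows from Lemma~\ref{lemma:convergence}.

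\textbf{Step 1 (violation implies miscoverage).} Fix $k \ge 1$. The input $\bm{u}_{k-1}^*$ was obtained by solving \eqref{eq:robust_constraint} at time $k-1$ with margin $\hat{Q}_k = \hat{Q}_k(\delta_k)$. This is exactly the quantile later used to define $e_k = \mathbb{I}(s_k > \hat{Q}_k(\delta_k))$, so the indices line up. By Assumption~\ref{assumption:feasibility} the problem is feasible, hence the robust constraint holds:
\[
\widehat{h}(\bm{x}_{k-1},\bm{u}_{k-1}^*) \ge \hat{Q}_k(\delta_k).
\]
Suppose $e_k = 0$, i.e., $s_k \le \hat{Q}_k(\delta_k)$. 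By the definition \eqref{eq:safety_score} of $s_k$ as an absolute residual,
\[
h(\bm{x}_k) \ge \widehat{h}(\bm{x}_{k-1},\bm{u}_{k-1}^*) - s_k \ge \hat{Q}_k(\delta_k) - s_k \ge 0,
\]
so $v_k = 0$. Contrapositively, $v_k = 1$ forces $e_k = 1$, and therefore $v_k \le e_k$.

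I also need to check the two clipped branches of $\hat{Q}_k$.
\begin{itemize}
\item If $\delta_k < 0$, then $\hat{Q}_k = M$. By Assumption~\ref{assumption0}, $s_k \le M$, so we are in the $e_k = 0$ case and the argument above applies verbatim.
\item If $\delta_k > 1$, then $\hat{Q}_k = -\epsilon < 0 \le s_k$, so $e_k = 1$ and $v_k \le e_k$ holds trivially.
\end{itemize}

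\textbf{Step 2 (domination by the boosted loss).} By Assumption~\ref{assumption:cost} and $\beta \ge 0$, we have $1 + \beta c(\bm{x}_k) \ge 1$. Hence $L_k = e_k(1 + \beta c(\bm{x}_k)) \ge e_k \ge v_k$.

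\textbf{Step 3 (averaging).} Summing over $k$ gives $\frac1T\sum_{k=1}^T v_k \le \frac1T\sum_{k=1}^T L_k$. Lemma~\ref{lemma:convergence} applies, since $\delta_1 = \alpha$ lies in the interval required by Lemma~\ref{lemma:boundedness}, and it says the right-hand side converges to $\alpha$. Taking $\limsup_{T\to\infty}$ yields $V \le \alpha$.

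\textbf{Main obstacle.} The delicate point is Step 1, namely the time-index bookkeeping. I must verify that the margin $\hat{Q}_k$ enforced when $\bm{u}_{k-1}^*$ was chosen is the same quantity $\hat{Q}_k(\delta_k)$ appearing in $e_k$. In Algorithm~\ref{alg:risk_sensitive_ACI}, $\hat{Q}_{k}$ is computed at iteration $k-1$ from $\delta_k$ and the scores $s_1,\dots,s_{k-1}$. At iteration $k$ it is compared against $s_k$ before $s_k$ is added to $\mathcal{C}$, so the two coincide.

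A second subtlety is that feasibility (Assumption~\ref{assumption:feasibility}) is genuinely needed for the robust constraint to hold. It is invoked both here and implicitly through Lemma~\ref{lemma:convergence}, even though the theorem statement only lists Assumptions~\ref{assumption0} and~\ref{assumption:cost}.
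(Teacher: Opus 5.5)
Your proposal is correct and follows essentially the same route as the paper: you show $v_k \le e_k \le L_k$ pointwise via the robust constraint $\widehat{h}(\bm{x}_{k-1},\bm{u}_{k-1}^*) \ge \hat{Q}_k$, then average and invoke Lemma~\ref{lemma:convergence}. Your contrapositive form ($e_k=0 \Rightarrow h(\bm{x}_k)\ge \hat{Q}_k - s_k \ge 0$) is slightly cleaner than the paper's direct argument, which tacitly uses $\hat{Q}_k \ge 0$ and so does not literally cover the $\delta_k>1$ branch where $\hat{Q}_k=-\epsilon$; you handle that branch explicitly, and you are right that the feasibility assumption is implicitly needed even though the statement does not list it.
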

Theorem \ref{thm:validity} shows that our cost-aware algorithm bounds the long-run constraint violation frequency $V$. Next, we establish a bound on the cumulative cost $J$.
\begin{theorem}[Upper Bounds on Average Cumulative Cost $J$]
\label{thm:cost-bound}
Under Assumptions \ref{assumption0} and \ref{assumption:cost}, for any risk sensitivity parameter $\beta > 0$, long-run average cumulative constraint violation cost $J$ satisfies $J= \limsup_{T \to \infty} \frac{1}{T} \sum_{k=1}^{T} c(\bm{x}_k) \le \frac{\alpha}{\beta}$. 
\end{theorem}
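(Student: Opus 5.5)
The plan is to combine the convergence of the boosted loss (Lemma~\ref{lemma:convergence}) with a pointwise domination $\beta\, c(\bm{x}_k) \le L_k$, valid for every $k \ge 1$. This is the same structural fact that should underlie Theorem~\ref{thm:validity}: every constraint violation is necessarily a miscoverage event. Granting the domination, we get for every $T$
\[
\frac{1}{T}\sum_{k=1}^{T} c(\bm{x}_k) \;\le\; \frac{1}{\beta}\cdot\frac{1}{T}\sum_{k=1}^{T} L_k .
\]
Taking $\limsup_{T\to\infty}$ and invoking Lemma~\ref{lemma:convergence}, the right-hand side converges to $\alpha/\beta$, which gives $J \le \alpha/\beta$. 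Lemma~\ref{lemma:convergence} is stated under the hypotheses of Lemma~\ref{lemma:boundedness}, so I will rely on Assumption~\ref{assumption:feasibility} and the initialization $\delta_1=\alpha$ of Algorithm~\ref{alg:risk_sensitive_ACI} being in force, in addition to Assumptions~\ref{assumption0} and~\ref{assumption:cost}.

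The key step is to show that $v_k = 1$ implies $e_k = 1$. Fix $k\ge 1$. By feasibility, the input $\bm{u}_{k-1}^*$ satisfies the robust constraint
\[
\widehat{h}(\bm{x}_{k-1},\bm{u}_{k-1}^*) \ge \hat{Q}_k(\delta_k),
\]
with indices shifted so that the quantile used at step $k-1$ is the one tested at step $k$. Suppose $e_k = 0$, i.e.\ $s_k \le \hat{Q}_k(\delta_k)$. By the definition \eqref{eq:safety_score} of $s_k$,
\[
h(\bm{x}_k) \ge \widehat{h}(\bm{x}_{k-1},\bm{u}_{k-1}^*) - s_k \ge \hat{Q}_k(\delta_k) - s_k \ge 0,
\]
so $v_k = 0$. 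I will check the boundary regimes of $\hat{Q}_k$ separately.
\begin{itemize}
\item If $\delta_k < 0$, then $\hat{Q}_k = M$. Assumption~\ref{assumption0} gives $s_k \le M$, so $e_k=0$, and the same chain of inequalities yields $h(\bm{x}_k)\ge M - s_k \ge 0$.
\item If $\delta_k > 1$, then $\hat{Q}_k = -\epsilon < 0 \le s_k$, so $e_k = 1$ automatically and there is nothing to prove.
\item At $k=1$, the quantile is taken over $\{-\epsilon, M\}$, and the same reasoning applies.
\end{itemize}
Hence $v_k \le e_k$ for all $k \ge 1$.

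With this in hand, $c(\bm{x}_k) = \rho(\bm{x}_k)\, v_k = e_k\, c(\bm{x}_k)$. Therefore
\[
L_k = e_k + \beta\, e_k\, c(\bm{x}_k) \ge \beta\, c(\bm{x}_k),
\]
since $e_k \ge 0$. This is exactly the domination needed for the first step.

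I expect the main obstacle to be the implication ``violation $\Rightarrow$ miscoverage''. It is where the controller's robust constraint, the feasibility assumption, and the piecewise definition of $\hat{Q}_k$ (including the $\delta<0$ and $\delta>1$ cases and the first calibration step) must all be tracked carefully, along with the time indexing between $\hat{Q}_{k+1}$ in \eqref{eq:robust_constraint} and $\hat{Q}_k(\delta_k)$ in $e_k$. Everything after that is a direct averaging argument.

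If a finite-$T$ statement is desired, I would also record the quantitative form obtained from telescoping \eqref{eq:update_rule},
\[
\frac{1}{T}\sum_{k=1}^{T} L_k = \alpha - \frac{\delta_{T+1}-\delta_1}{\gamma T}.
\]
Combined with the interval from Lemma~\ref{lemma:boundedness}, this gives
\[
\frac{1}{T}\sum_{k=1}^{T} c(\bm{x}_k) \le \frac{\alpha}{\beta} + \frac{1+\gamma L_{\max}}{\beta\gamma T}.
\]
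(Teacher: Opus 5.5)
Your proposal is correct and follows the paper's proof: both use the fact that a violation forces a miscoverage, so $c(\bm{x}_k)=e_k c(\bm{x}_k)$, then drop the nonnegative $e_k$ term of $L_k$ to get $\beta c(\bm{x}_k)\le L_k$, and conclude with Lemma~\ref{lemma:convergence}. Your contrapositive derivation of $v_k\le e_k$ does not require $\hat{Q}_k\ge 0$, and you state explicitly the feasibility assumption that the paper uses only implicitly, so your version is if anything slightly more careful than the paper's.
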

Theorems \ref{thm:validity} and \ref{thm:cost-bound} provide a direct method for online control design with run-time assurance. A designer can first set the maximum tolerable constraint violation frequency by choosing $\alpha$ (since $V \le \alpha$). Subsequently, to ensure the average cost $J$ remains below a specified tolerance $D_{\text{tol}}$, the sensitivity parameter $\beta$ is chosen to satisfy the condition $\alpha/\beta \le D_{\text{tol}}$.

The preceding theorems provide long-run guarantees. Such finite-time upper bounds for both the constraint violation frequency and the average cumulative cost can be directly obtained by leveraging the results of Lemma \ref{lemma:boundedness}.

\begin{corollary}[Finite-Time Violation Frequency and Cost Bounds]
\label{coro:violation}
Let $V_T:=\frac{1}{T}\sum_{k=1}^T v_k$ be the violation frequency and $J_T:=\frac{1}{T}\sum_{k=1}^T c(\bm{x}_k)$ be the average cumulative cost over the first $T$ time steps. The following bounds hold:

\textbf{1. Violation Frequency:} $V_T \le \alpha + \frac{\delta_1 - \delta_{T+1}}{T\gamma} = \alpha + O(1/T)$.

\textbf{2. Average Cost:} $J_T \le \frac{\alpha}{\beta} + \frac{\delta_1 - \delta_{T+1}}{\beta T\gamma} - \frac{1}{\beta T} \sum_{k=1}^T e_k \le \frac{\alpha}{\beta} + O(1/T)$.
\end{corollary}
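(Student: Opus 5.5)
The plan is to telescope the update rule \eqref{eq:update_rule}, which turns the average boosted loss into an exact identity, and then bound the violation indicator and the cost pointwise by the components of $L_k$. Summing $\delta_{k+1}-\delta_k=\gamma(\alpha-L_k)$ over $k=1,\dots,T$ gives
\[
\delta_{T+1}-\delta_1=\gamma\Big(T\alpha-\sum_{k=1}^T L_k\Big),
\qquad\text{hence}\qquad
\frac{1}{T}\sum_{k=1}^T L_k=\alpha+\frac{\delta_1-\delta_{T+1}}{T\gamma}.
\]
The two bounds then follow from pointwise relations between $v_k$, $e_k$, $c(\bm{x}_k)$ and $L_k$. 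The $O(1/T)$ rates come from Lemma \ref{lemma:boundedness}. Since $\delta_1$ and $\delta_{T+1}$ both lie in $[-\gamma(L_{\max}-\alpha),\,1+\gamma\alpha]$, we have $\delta_1-\delta_{T+1}\le 1+\gamma L_{\max}$, which is a constant independent of $T$.

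The key structural step, and the one I expect to be the main obstacle, is showing that every constraint violation is a miscoverage event, i.e.\ $v_k\le e_k$. At step $k-1$ the input $\bm{u}_{k-1}^*$ satisfies the robust constraint $\widehat h(\bm{x}_{k-1},\bm{u}_{k-1}^*)\ge \hat Q_k(\delta_k)$; it exists by Assumption \ref{assumption:feasibility}. If $h(\bm{x}_k)<0$, then
\[
s_k\;\ge\;\widehat h(\bm{x}_{k-1},\bm{u}_{k-1}^*)-h(\bm{x}_k)\;>\;\widehat h(\bm{x}_{k-1},\bm{u}_{k-1}^*)\;\ge\;\hat Q_k(\delta_k),
\]
so $e_k=1$. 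The boundary cases of $\hat Q_k$ need a separate check, and this is where Assumption \ref{assumption0} enters.
\begin{itemize}
\item If $\delta_k>1$, then $\hat Q_k=-\epsilon<0\le s_k$, so $e_k=1$ automatically.
\item If $\delta_k<0$, then $\hat Q_k=M$. A violation would force $s_k>M$, which contradicts $s_k\le M$, so $v_k=0$.
\end{itemize}
In every case $v_k\le e_k\le L_k$, using $L_k=e_k(1+\beta c(\bm{x}_k))\ge e_k$ because $c\ge 0$. Averaging over $k$ and substituting the telescoped identity gives part 1.

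For part 2, the same implication shows that $c(\bm{x}_k)>0$ forces $e_k=1$, so $e_k c(\bm{x}_k)=c(\bm{x}_k)$ and therefore $L_k=e_k+\beta c(\bm{x}_k)$. Summing and dividing by $\beta T$ gives
\[
J_T=\frac{1}{\beta T}\sum_{k=1}^T L_k-\frac{1}{\beta T}\sum_{k=1}^T e_k
=\frac{\alpha}{\beta}+\frac{\delta_1-\delta_{T+1}}{\beta T\gamma}-\frac{1}{\beta T}\sum_{k=1}^T e_k .
\]
This equality yields the stated inequality. Dropping the nonpositive term $-\frac{1}{\beta T}\sum_k e_k$ and applying the Lemma \ref{lemma:boundedness} bound on $\delta_1-\delta_{T+1}$ gives $J_T\le \alpha/\beta+O(1/T)$, with the constant $(1+\gamma L_{\max})/(\beta\gamma)$.
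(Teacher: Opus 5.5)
Your proposal is correct and matches the paper's proof: telescope the update to get $\frac{1}{T}\sum_{k=1}^T L_k=\alpha+\frac{\delta_1-\delta_{T+1}}{T\gamma}$, use $v_k\le e_k\le L_k$ and $e_k c(\bm{x}_k)=c(\bm{x}_k)$, and take the $O(1/T)$ rate from Lemma~\ref{lemma:boundedness}. Your chain $s_k\ge \widehat h-h(\bm{x}_k)>\widehat h\ge \hat Q_k$ for $v_k\le e_k$ is slightly cleaner than the paper's argument, since it does not need $\hat Q_k\ge 0$ (which fails, e.g., when $\delta_k>1$ and $\hat Q_k=-\epsilon$); it also makes your separate boundary-case check redundant.
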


This corollary provides a stronger, more practical guarantee. It shows that the violation frequency and average cost are tightly concentrated around their long-run targets, $\alpha$ and $\alpha/\beta$, respectively. The deviation from these targets is controlled by a transient term that diminishes at a rate of $O(1/T)$, ensuring that the system's performance rapidly converges to its desired steady-state behavior.

The next propositions offer deeper insight into the algorithm's adaptive behavior, framing it first as a rational economic agent and second as a principled online learning procedure.

\begin{proposition}[Conservation of Risk Budget]
\label{thm:conservation}
The algorithm dynamically trades off miscoverage frequency against violation severity. Let $\hat{\alpha}_T:=\frac{1}{T}\sum_{k=1}^T e_k$ be the empirical miscoverage rate and $\bar{c}_T:= \frac{\sum_{k=1}^T e_k c(\bm{x}_k)}{\sum_{k=1}^T e_k}$ be the average cost conditional on a miscoverage event. In the long run, these quantities adhere to the conservation law:
$\lim_{T \to \infty} \hat{\alpha}_T (1 + \beta \bar{c}_T) = \alpha$.
\end{proposition}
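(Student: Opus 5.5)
The plan is to show that the conservation law is an algebraic rewriting of Lemma~\ref{lemma:convergence}. First I will rewrite the average boosted loss in terms of $\hat{\alpha}_T$ and $\bar{c}_T$. By definition \eqref{eq:boosted_loss}, $L_k = e_k + \beta e_k c(\bm{x}_k)$. Averaging over $k=1,\dots,T$ gives
\begin{equation*}
\frac{1}{T}\sum_{k=1}^T L_k = \hat{\alpha}_T + \frac{\beta}{T}\sum_{k=1}^T e_k c(\bm{x}_k).
\end{equation*}

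Whenever $\sum_{k=1}^T e_k > 0$, the definition of $\bar{c}_T$ gives $\frac{1}{T}\sum_{k=1}^T e_k c(\bm{x}_k) = \hat{\alpha}_T \bar{c}_T$. Substituting, we obtain the exact finite-time identity
\begin{equation*}
\hat{\alpha}_T(1+\beta \bar{c}_T) = \frac{1}{T}\sum_{k=1}^T L_k .
\end{equation*}
Taking $T\to\infty$ and invoking Lemma~\ref{lemma:convergence}, whose hypotheses are those of Lemma~\ref{lemma:boundedness}, yields the claimed limit $\alpha$.

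The only real obstacle is well-definedness of $\bar{c}_T$: it requires at least one miscoverage event among the first $T$ steps. I will show that $\sum_{k=1}^T e_k \ge 1$ for all sufficiently large $T$. Suppose instead that $e_k = 0$ for all $k$. Then $L_k = 0$ for every $k$, so $\delta_{k+1} = \delta_k + \gamma\alpha$ grows without bound. This contradicts Lemma~\ref{lemma:boundedness}, which confines $\delta_k$ to $[-\gamma(L_{\max}-\alpha),\, 1+\gamma\alpha]$.

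The same conclusion can be reached without Lemma~\ref{lemma:boundedness}. If no miscoverage ever occurred, the average loss would be identically $0$, contradicting Lemma~\ref{lemma:convergence} since $\alpha>0$. Hence there is a finite $T_0$ such that $\sum_{k=1}^T e_k \ge 1$ for all $T\ge T_0$. The identity above holds for every such $T$, and the limit is taken along $T \ge T_0$. In the write-up I will state this convention explicitly, adopting $\bar{c}_T := 0$ before $T_0$, which does not affect the limit.
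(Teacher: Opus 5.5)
Your proof is correct and matches the paper's argument: the paper also factors $\sum_{k=1}^T e_k$ out of $\sum_{k=1}^T L_k$ to get $\sum_{k=1}^T L_k = T\hat{\alpha}_T(1+\beta\bar{c}_T)$, then divides by $T$ and applies Lemma~\ref{lemma:convergence}. Your extra check that $\sum_{k=1}^T e_k \ge 1$ for all sufficiently large $T$, so that $\bar{c}_T$ is well defined, is valid and fills in a detail the paper leaves implicit.
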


This proposition reveals the algorithm's economic rationality. It treats $\alpha$ not as a simple frequency target, but as a fixed risk budget. The convergence condition can be rewritten as $\hat{\alpha}_T \longrightarrow \frac{\alpha}{1 + \beta \bar{c}_T}$, showing how this budget is spent.
This expression demonstrates a dynamic trade-off. In benign environments where the average violation cost $\bar{c}_T$ is low, the denominator approaches 1, and the miscoverage rate $\hat{\alpha}_T$ converges toward the full budget $\alpha$. However, in hostile environments where $\bar{c}_T$ is high, the term $(1 + \beta\bar{c}_T)$ acts as an inflation factor. The ``price'' of each miscoverage event increases, compelling the algorithm to automatically reduce its miscoverage frequency $\hat{\alpha}_T$ to stay within its fixed budget $\alpha$. This mechanism enforces harm regulation by shifting from a purely statistical guarantee (bounded failure rate) to an operational one (bounded average cost).



Finally, we establish that this adaptive behavior is not heuristic but is grounded in a principled online optimization framework.

\begin{proposition}[Regret Minimization Perspective]
\label{thm:regret}
The update rule \eqref{eq:update_rule} is an instance of Online Gradient Descent (OGD) on the sequence of linearized loss functions $\mathcal{L}_k(\delta) = (L_k - \alpha)\delta$. Consequently, the algorithm achieves sublinear regret against the best fixed parameter $\delta^\star$ in hindsight: $\text{Regret}_T = \sum_{k=1}^{T} \mathcal{L}_k(\delta_{k}) - \min_{\delta^\star \in [0,1]} \sum_{k=1}^{T} \mathcal{L}_k(\delta^\star) \le O(\sqrt{T})$.
\end{proposition}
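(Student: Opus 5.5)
The plan is to verify the OGD identification directly and then run the standard Zinkevich-type regret analysis, using Lemma \ref{lemma:boundedness} in place of an explicit projection step.

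\textbf{Step 1: the update is a gradient step.} For the linear loss $\mathcal{L}_k(\delta)=(L_k-\alpha)\delta$ we have $\nabla\mathcal{L}_k(\delta)=L_k-\alpha$. Hence the gradient step $\delta_{k+1}=\delta_k-\gamma\nabla\mathcal{L}_k(\delta_k)=\delta_k+\gamma(\alpha-L_k)$ coincides with \eqref{eq:update_rule}. This settles the first claim.

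\textbf{Step 2: the per-step inequality.} Fix any comparator $\delta^\star\in[0,1]$ and set $g_k:=L_k-\alpha$. By Assumption \ref{assumption:cost}, $|g_k|\le G:=\max(\alpha,L_{\max}-\alpha)\le 1+\beta$. Since each $\mathcal{L}_k$ is linear, $\mathcal{L}_k(\delta_k)-\mathcal{L}_k(\delta^\star)=g_k(\delta_k-\delta^\star)$. Expanding $(\delta_{k+1}-\delta^\star)^2=(\delta_k-\delta^\star)^2-2\gamma g_k(\delta_k-\delta^\star)+\gamma^2g_k^2$ gives
\[
g_k(\delta_k-\delta^\star)=\frac{(\delta_k-\delta^\star)^2-(\delta_{k+1}-\delta^\star)^2}{2\gamma}+\frac{\gamma}{2}g_k^2 .
\]
There is no projection in \eqref{eq:update_rule}, so this is an exact identity, not merely an inequality.

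\textbf{Step 3: telescoping.} Summing the identity over $k=1,\dots,T$ yields
\[
\text{Regret}_T(\delta^\star)\le\frac{(\delta_1-\delta^\star)^2}{2\gamma}+\frac{\gamma G^2T}{2}.
\]
Here Lemma \ref{lemma:boundedness} keeps all iterates inside $[-\gamma(L_{\max}-\alpha),1+\gamma\alpha]$. Consequently $(\delta_1-\delta^\star)^2\le D^2$ with $D:=1+\gamma L_{\max}$, uniformly over $\delta^\star\in[0,1]$. The bound therefore holds for the minimizing $\delta^\star$ as well.

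\textbf{Step 4: choice of step size, the main obstacle.} The first term is $O(1/\gamma)$ and the second is $O(\gamma T)$, so the bound is $O(\sqrt{T})$ only if $\gamma\asymp 1/\sqrt{T}$, for example $\gamma=D/(G\sqrt{T})$, which gives $\text{Regret}_T\le DG\sqrt{T}$. For a fixed $\gamma$ independent of $T$ the bound is linear in $T$, so the statement must be read with the horizon-tuned step size. One could alternatively use a doubling trick or a decaying $\gamma_k\propto1/\sqrt{k}$, with the standard bound $\frac{D^2}{2\gamma_T}+\frac{G^2}{2}\sum_k\gamma_k=O(\sqrt{T})$.

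The decaying-step route, however, requires re-checking Lemma \ref{lemma:boundedness} with time-varying steps. The iterates remain in $[-\gamma_1(L_{\max}-\alpha),1+\gamma_1\alpha]$ by the same argument, because $\gamma_k\le\gamma_1$. I would present the tuned constant step as the main proof and remark on the anytime variant.

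One further point on the linear comparator class: $\min_{\delta^\star\in[0,1]}$ is attained at an endpoint. This causes no difficulty because the analysis above holds pointwise in $\delta^\star$.
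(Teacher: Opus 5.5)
Your proof has the same skeleton as the paper's: identify the update as OGD on $\mathcal{L}_k$, then use the standard OGD regret bound. The difference is that the paper only cites ``a standard result for OGD on bounded linear losses,'' whereas you carry out the telescoping. In doing so you correctly find that it gives $O(\sqrt{T})$ only for a horizon-tuned or decaying step size.

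That caveat is not an artifact of the technique: the literal statement is false for the fixed $\gamma$ of Algorithm~\ref{alg:risk_sensitive_ACI}. Three facts combine to show this:
\begin{itemize}
\item your Step~2 identity is exact;
\item $g_k\in\{-\alpha\}\cup[1-\alpha,L_{\max}-\alpha]$, so $g_k^2\ge\min(\alpha,1-\alpha)^2$ for every $k$;
\item Lemma~\ref{lemma:boundedness} gives $(\delta_{T+1}-\delta^\star)^2\le(1+\gamma L_{\max})^2$.
\end{itemize}
Together they yield
\[
\text{Regret}_T\;\ge\;\frac{\gamma}{2}\min(\alpha,1-\alpha)^2\,T-\frac{(1+\gamma L_{\max})^2}{2\gamma},
\]
which is linear in $T$ for fixed $\gamma$. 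So your $\gamma\asymp 1/\sqrt{T}$ qualification is necessary. State it explicitly as a required hypothesis rather than saying the statement ``must be read'' that way.

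Two minor points.
\begin{itemize}
\item For the constant-step upper bound you do not need Lemma~\ref{lemma:boundedness}. Only $\delta_1=\alpha$ enters, so $|\delta_1-\delta^\star|\le 1$ and you may take $D=1$. This also removes the circularity between $D=1+\gamma L_{\max}$ and $\gamma=D/(G\sqrt{T})$. The lemma is genuinely needed only in the decaying-step variant and in the lower bound above.
\item The tuning costs something elsewhere in the paper. With $\gamma\asymp 1/\sqrt{T}$, the guaranteed transient $(\delta_1-\delta_{T+1})/(T\gamma)$ in Lemma~\ref{lemma:convergence} and Corollary~\ref{coro:violation} degrades from $O(1/T)$ to $O(1/\sqrt{T})$. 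Hence the paper's $O(1/T)$ finite-time bounds and its $O(\sqrt{T})$ regret are not obtained with the same step size.
\end{itemize}
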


This regret-minimization perspective provides a solid theoretical foundation, showing that the algorithm is efficient and its performance approaches that of the best fixed parameter chosen in hindsight. Moreover, as noted in \cite{gibbs2024conformal}, regret guarantees rule out trivial strategies that alternate between overly permissive and overly conservative predictions, ensuring meaningful and stable adaptation.

\section{Experiments}
\label{sec:experiments}
In this section, we validate the effectiveness of the proposed Online Adaptive Conformal Cost Control Algorithm through a series of numerical experiments on four challenging control benchmarks: the Vanderpol oscillator \cite{henrion2013convex}, Pendulum~\cite{towers2024gymnasium}, MountainCar~\cite{towers2024gymnasium}, and Lorenz~\cite{lorenz1996predictability}. Our primary objective is to demonstrate that our method provides superior runtime assurance in terms of both violation frequency and cumulative cost when compared to standard ACI. In Appendix~\ref{sec:app_sensitivity}, we present a \textbf{sensitivity analysis} of the proposed algorithm with respect to $\gamma$, $\beta$, and the sliding-window size.

\noindent\textbf{Comparison Methods:} We evaluate the proposed Cost-Aware ACI in Alg.~\ref{alg:risk_sensitive_ACI} against the following three methods: 1. \textbf{\textit{Standard ACI}}: A variant of our algorithm where the cost-aware mechanism is disabled ($\beta=0$), representing the conventional ACI approach. 2. \textbf{\textit{Conformal PID}}: A variant of our method in which the ACI scheme is replaced by the conformal PID method of~\cite{angelopoulos2023conformal}. 3. \textbf{\textit{No ACI}}: A baseline where the ACI module is removed entirely by fixing the threshold $\hat{Q}_k \equiv 0$.

To create a rigorous evaluation, the task performance objective $J_{\text{task}}$ in each benchmark is intentionally designed to be antagonistic to constraint satisfaction, meaning that achieving better task performance incentivizes the system to leave the designated admissible set. This creates a realistic and challenging trade-off between performance and runtime assurance, stress-testing each algorithm's ability to balance competing objectives. Each simulation is run for $10,000$ time steps. Further implementation details are provided in Appendix~C. Fig.~\ref{fig:results} compares the four methods across 10 random seeds with $\alpha = 0.1$, whereas Table~\ref{tab:results} reports the results on MountainCar under different choices of $\alpha$. Our analysis focuses on two key comparisons:

\begin{figure}[h!]
    \centering
    \subfigure[Standard ACI]{\includegraphics[width=0.45\linewidth]{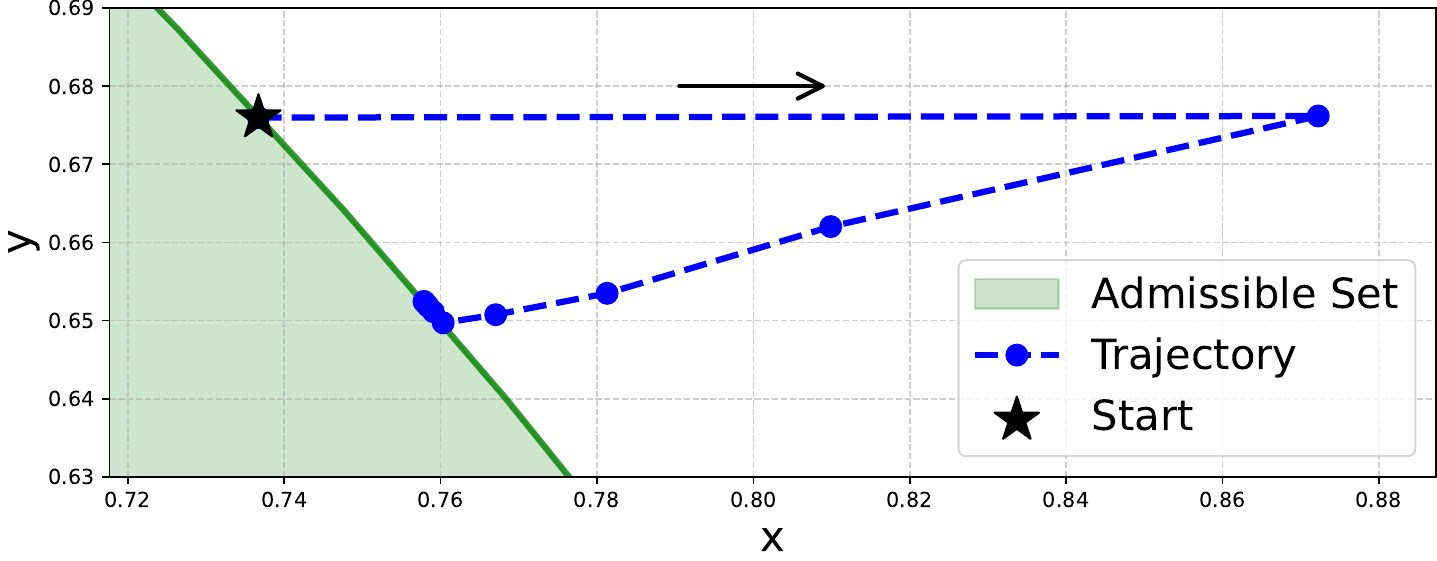}
    \label{fig:aci}
    }
    \subfigure[Ours]{\includegraphics[width=0.45\linewidth]{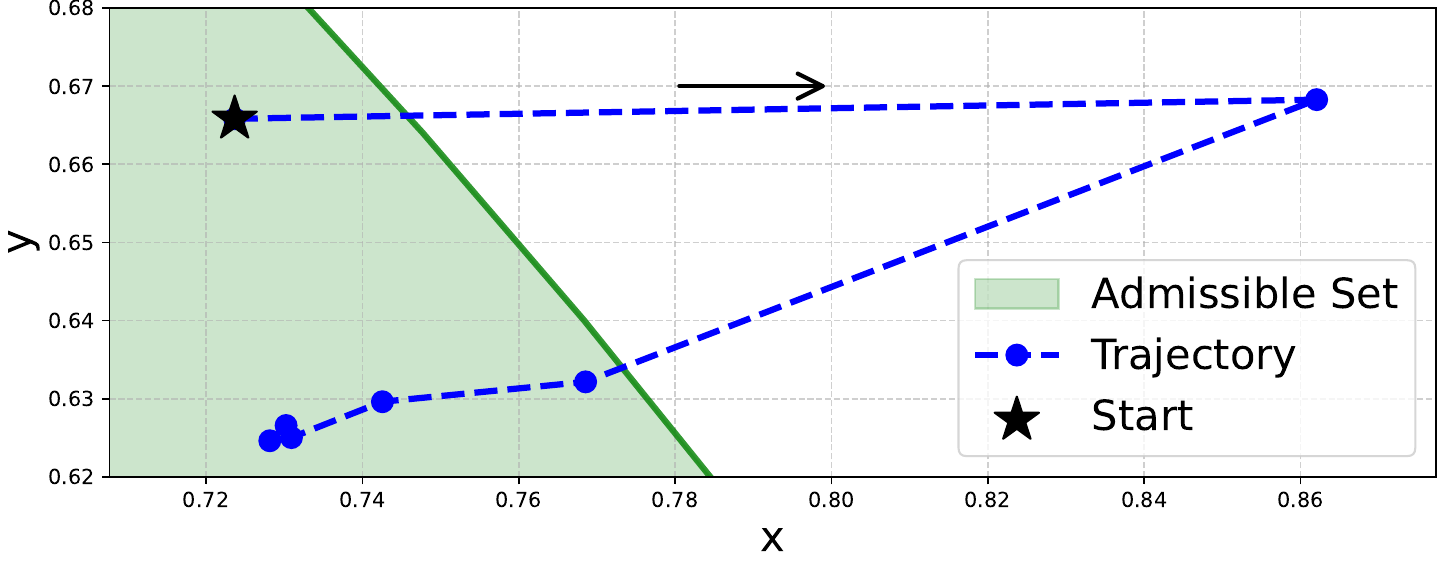}
    \label{fig:ours}
    }
    \caption{Comparison of selected trajectories on Vanderpol}
    \label{fig:traj}
\end{figure}


\noindent\textbf{The Benefit of ACI}: Comparing all ACI-based methods to the baseline without ACI shows the value of adaptive assurance constraints. The ``No ACI'' baseline
suffers from catastrophic failure rates across all benchmarks. In contrast, all ACI methods successfully regulate this risk. This confirms that Alg.~\ref{alg:risk_sensitive_ACI} provides a robust foundation for enhancing runtime assurance.

\noindent \textbf{Superiority of Cost-Aware ACI}: The central claim of our work is that Cost-Aware ACI provides superior runtime assurance. The results in Fig.~\ref{fig:results} and Table \ref{tab:results} strongly support this. Across all benchmarks and all values of $\alpha$, our method consistently and significantly outperforms standard ACI and Conformal PID on both metrics ($J$ and $V$). This performance gap is a direct result of our cost-sensitive update rule, which penalizes severe violations more aggressively, compelling the controller to adopt a more cautious policy. From Table~\ref{tab:results}, as predicted by our theory, this advantage becomes more pronounced for larger $\alpha$, where the algorithm has a larger ``risk budget'' that the standard ACI is more likely to misuse on severe violations.

The improved runtime assurance comes at a slight cost to task performance ($\bar{J}_{task}$). This trade-off is expected and even desirable, as our benchmarks were designed to make constraint satisfaction and performance compete. The fact that our method achieves substantial reductions in violation frequency and cost for only a minor performance loss highlights its efficiency. 


\begin{figure}[h!]
    \vspace{-10pt}
    \centering
    \subfigure[Vanderpol: Task Performance]{\includegraphics[width=0.30\linewidth]{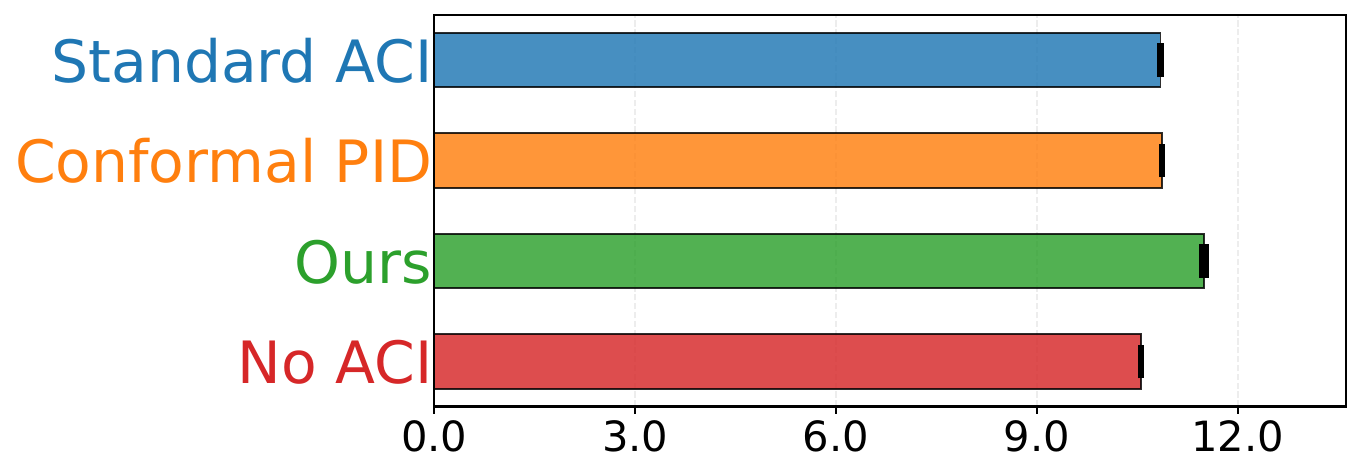}
    }
    \subfigure[Vanderpol: Violation Cost]{\includegraphics[width=0.30\linewidth]{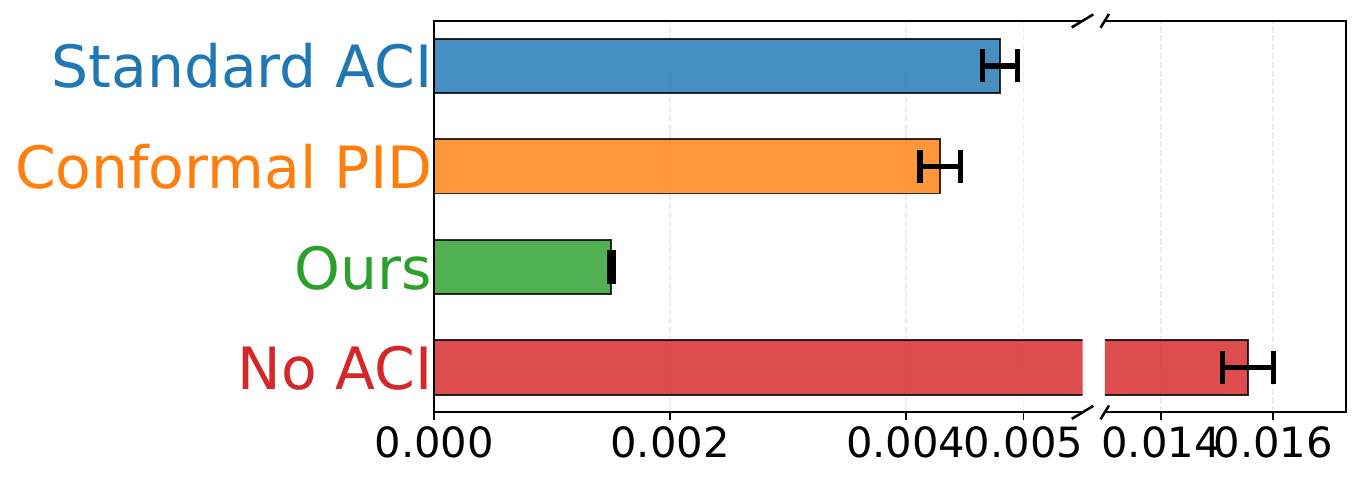}
    }
    \subfigure[Vanderpol: Violation Frequency]{\includegraphics[width=0.30\linewidth]{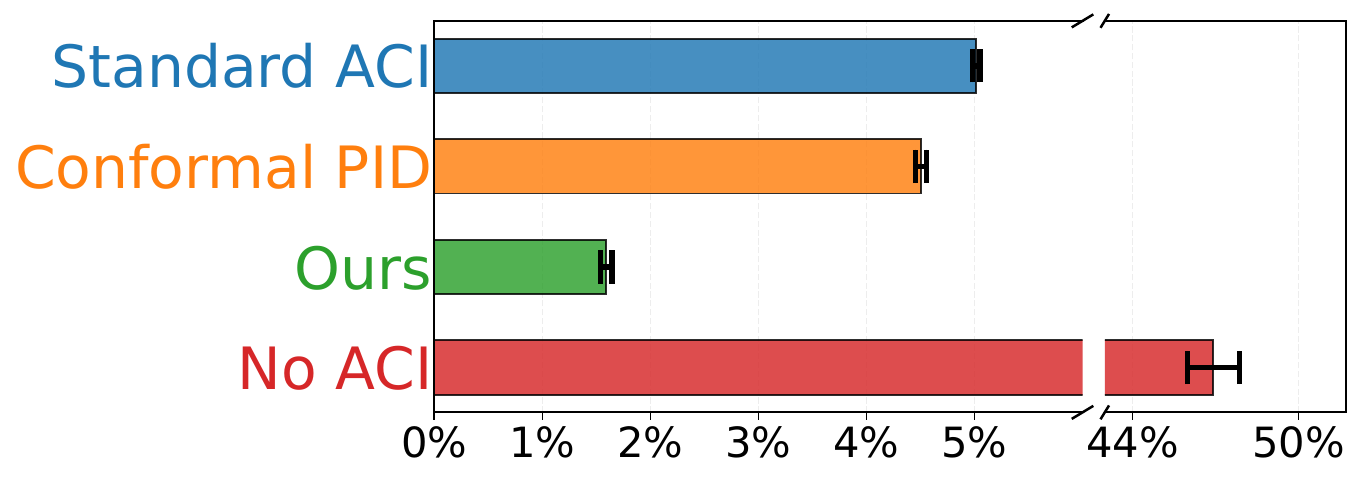}
    }\\
    \subfigure[Pendulum: Task Performance]{\includegraphics[width=0.30\linewidth]{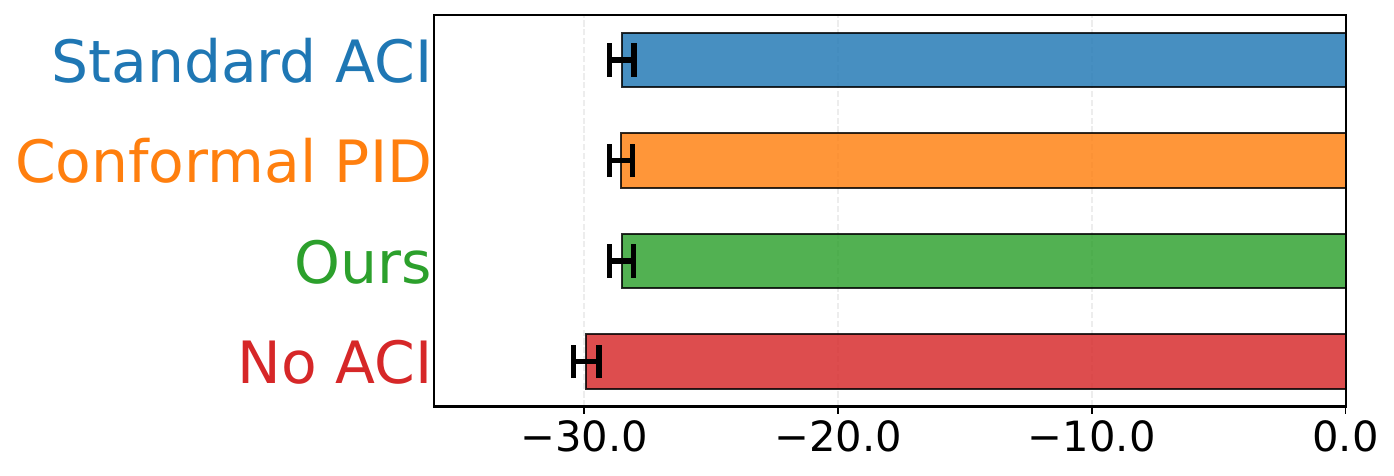}
    }
    \subfigure[Pendulum: Violation Cost]{\includegraphics[width=0.30\linewidth]{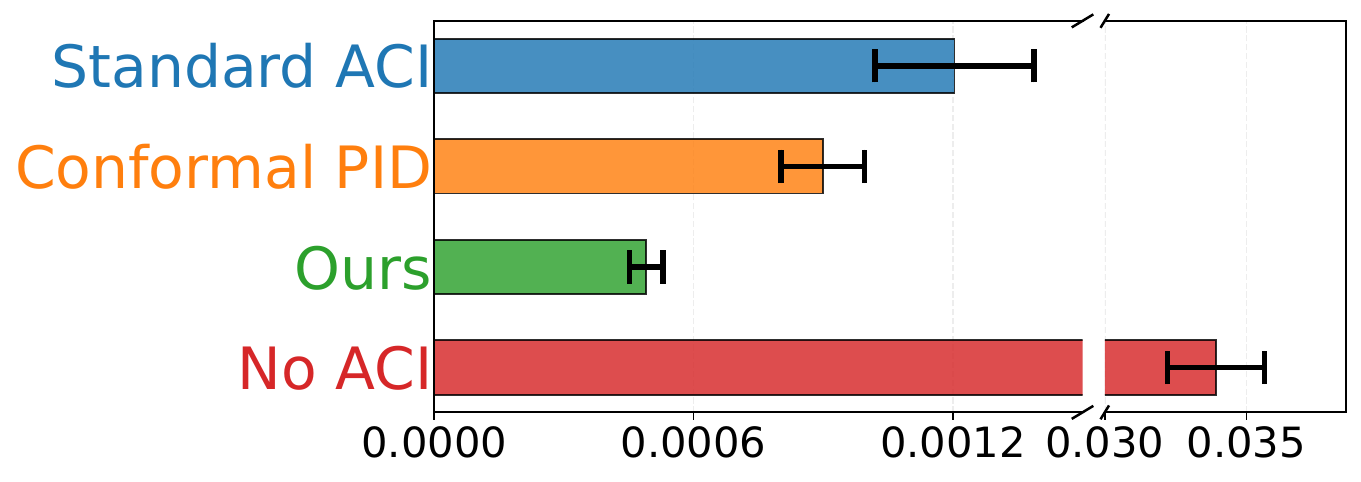}
    }
    \subfigure[Pendulum: Violation Frequency]{\includegraphics[width=0.30\linewidth]{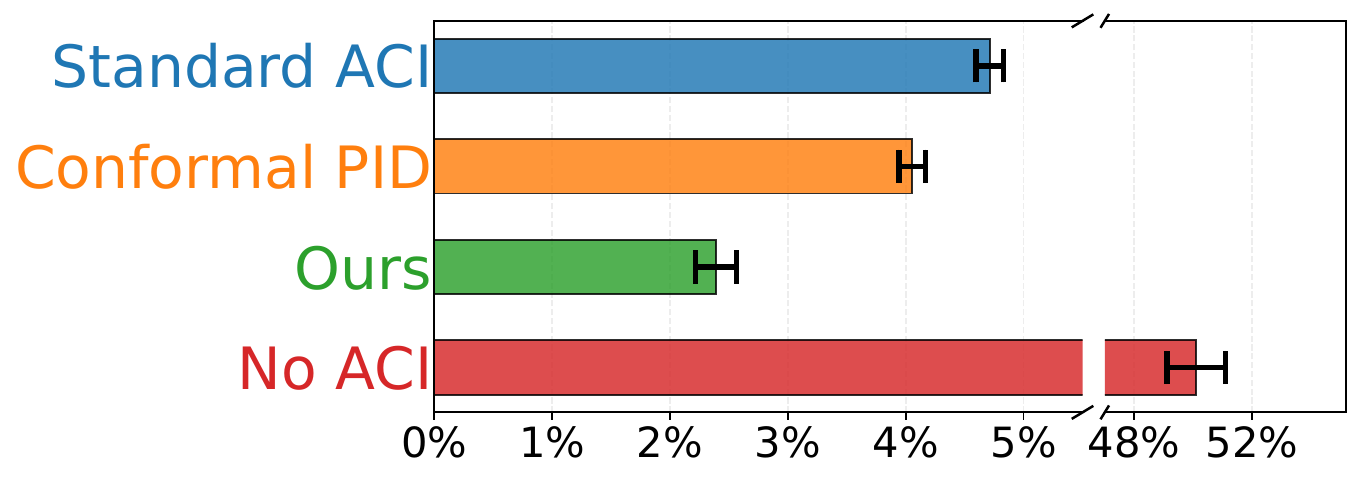}
    }\\
    \subfigure[MountainCar: Task Performance]{\includegraphics[width=0.30\linewidth]{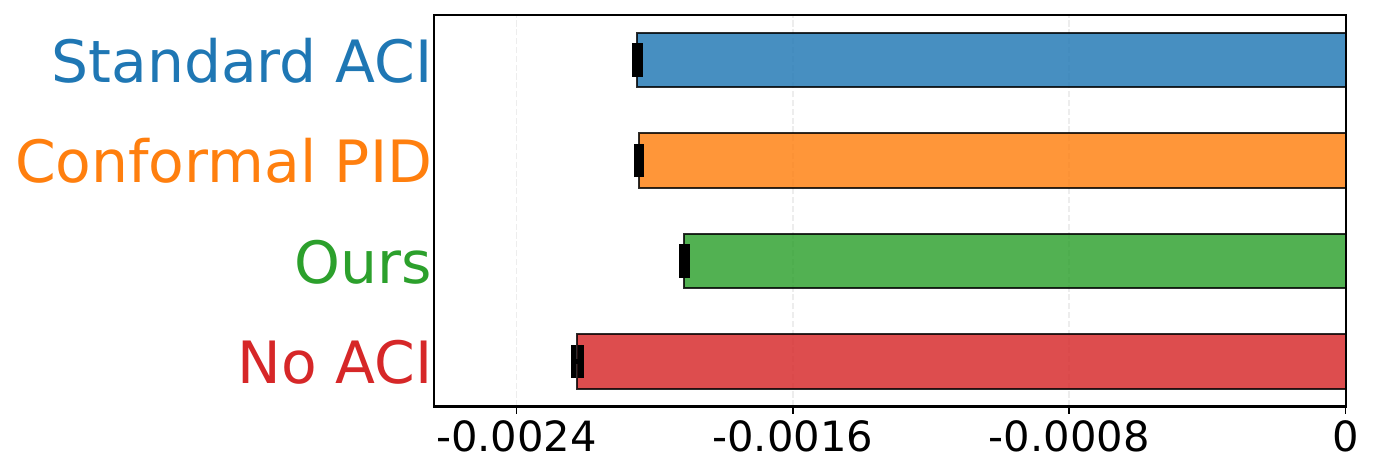}
    }
    \subfigure[MountainCar: Violation Cost]{\includegraphics[width=0.30\linewidth]{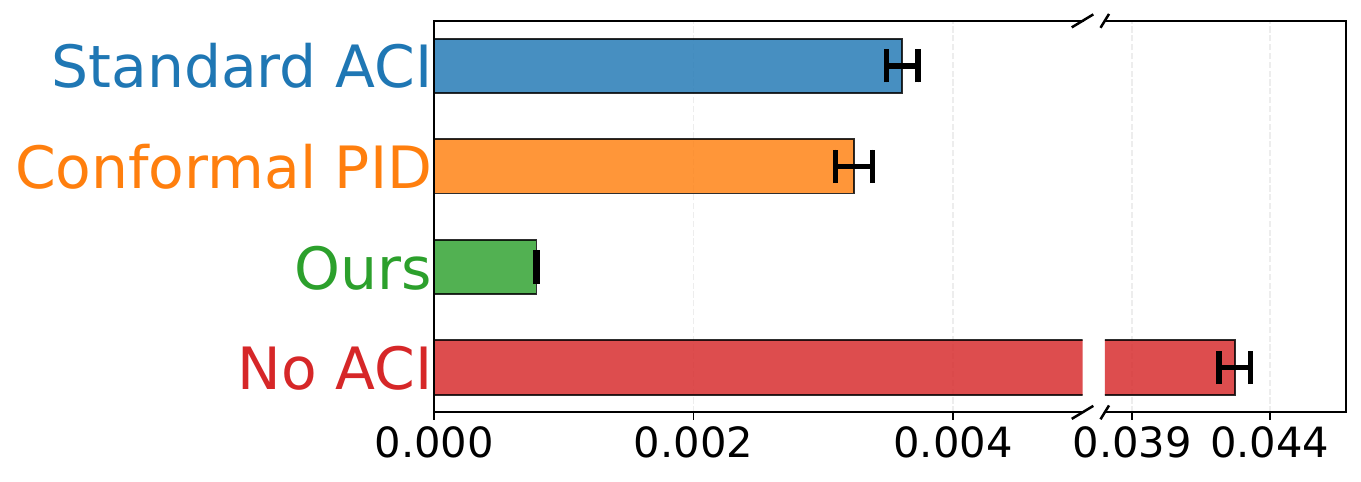}
    }
    \subfigure[MountainCar:Violation Frequency]{\includegraphics[width=0.30\linewidth]{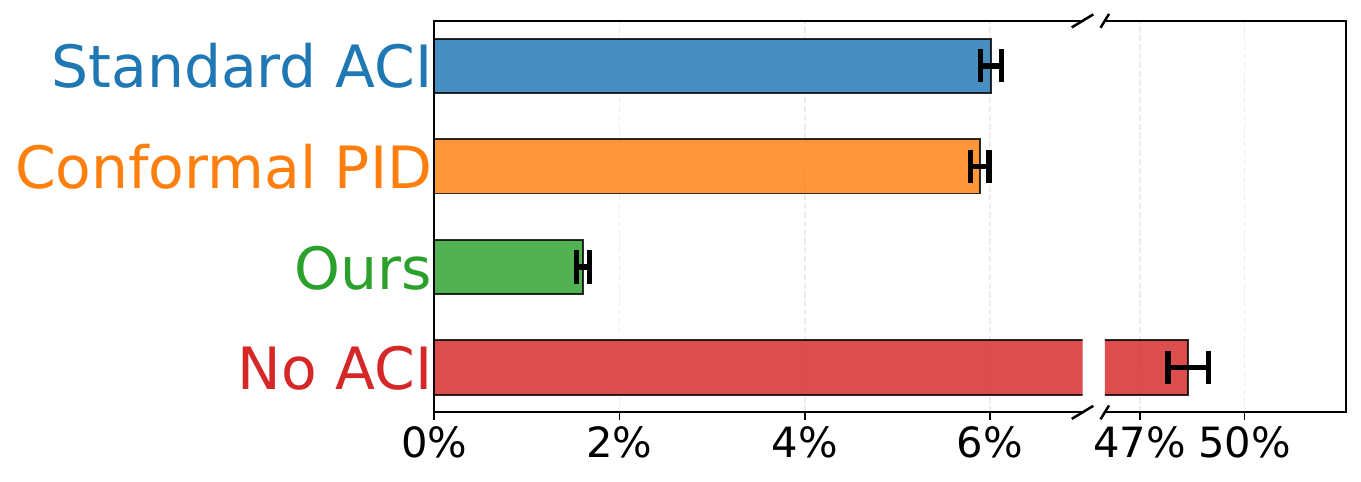}
    }\\
    \subfigure[Lorenz: Task Performance]{\includegraphics[width=0.30\linewidth]{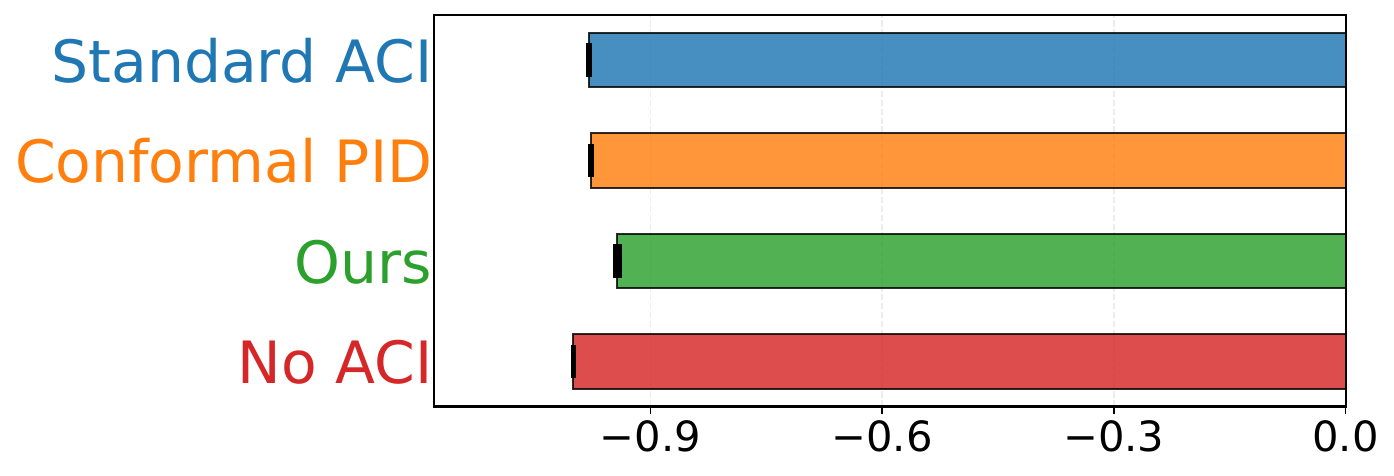}
    }
    \subfigure[Lorenz: Violation Cost]{\includegraphics[width=0.30\linewidth]{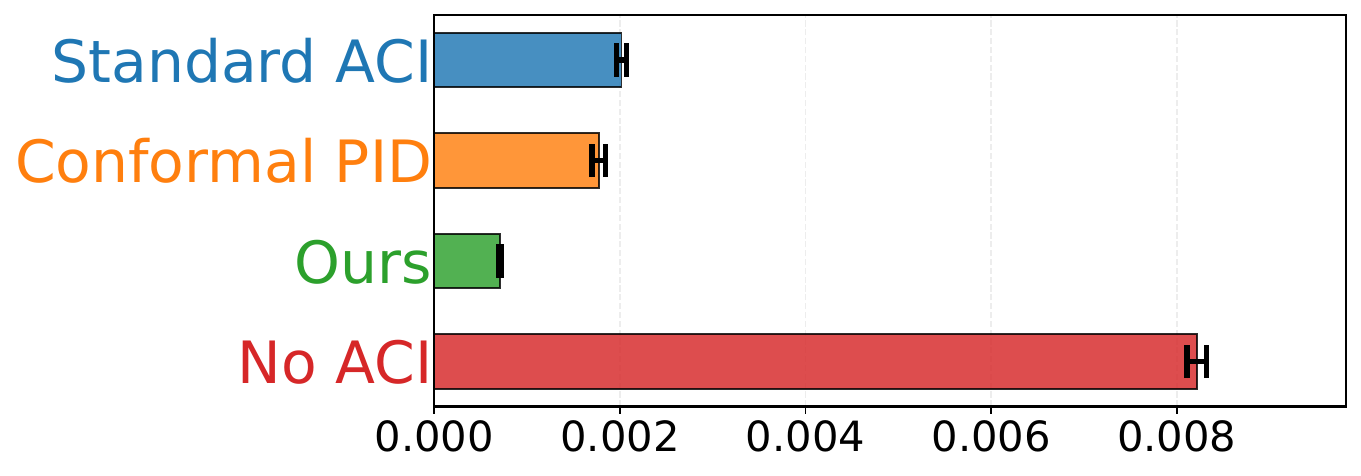}
    }
    \subfigure[Lorenz: Violation Frequency]{\includegraphics[width=0.30\linewidth]{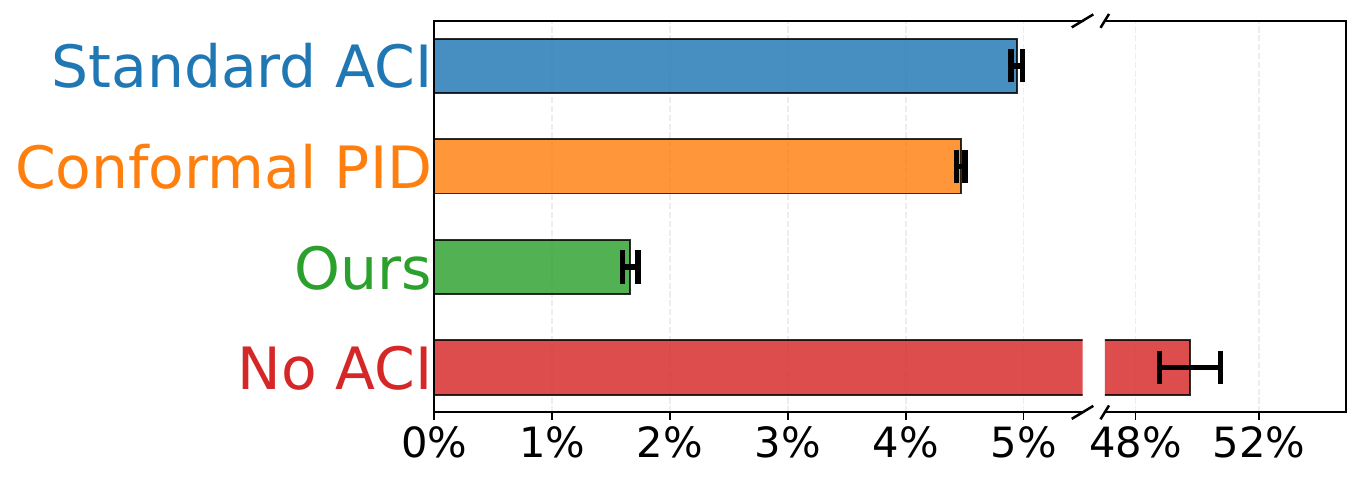}
    }
    \caption{Comparison of experimental results (\textbf{lower is better for all three metrics})}
    \label{fig:results}
\end{figure}
\begin{table*}[!h]
\centering
\caption{Performance comparison under different $\alpha$ on the MountainCar benchmark}
\footnotesize
\setlength{\tabcolsep}{0.9mm}{
    \begin{tabular}{*{10}{c}}
    \toprule
    \multirow{2}{*}{$\alpha$} & \multicolumn{3}{c}{Standard ACI} & \multicolumn{3}{c}{Conformal PID} & \multicolumn{3}{c}{Ours} \\\cmidrule(lr){2-4} \cmidrule(lr){5-7} \cmidrule(lr){8-10}
    &$\bar{J}_{\text{task}}~(\!\times\!10^{-3})$ & ${J}~(\!\times\!10^{-2})$ & $V$ & $\bar{J}_{\text{task}}~(\!\times\!10^{-3})$ & ${J}~(\!\times\!10^{-2})$ & $V$ & $\bar{J}_{\text{task}}~(\!\times\!10^{-3})$ & ${J}~(\!\times\!10^{-2})$ & $V$\\ \midrule
    0.5 & $-2.17$ & $2.081$ & $23.37\%$ & $-2.17$ & $2.0026$ & $22.77\%$ & $-2.02$ & $0.410$ & $4.85\%$   \\
    0.4 & $-2.15$ & $1.675$ & $18.94\%$ & $-2.15$ & $1.625$  & $18.30\%$ & $-1.98$ & $0.327$ & $4.21\%$   \\
    0.3 & $-2.12$ & $1.260$ & $14.67\%$ & $-2.12$ & $1.209$ & $13.91\%$ & $-1.92$ & $0.244$ & $3.29\%$   \\
    0.2 & $-2.10$ & $0.815$ & $10.45\%$ & $-2.11$ & $0.749$ & $9.81\%$ & $-1.83$ & $0.162$ & $2.41\%$\\
    0.1 & $-2.06$ & $0.376$ & $5.98\%$ & $-2.05$ & $0.341$ & $5.90\%$ & $-1.91$ & $0.079$ & $1.53\%$\\
    0.01 & $-1.94$ & $0.023$ & $0.89\%$ & $-1.94$ & $0.012$ & $0.71\%$ & $-1.88$ & $0.007$ & $0.30\%$ \\
\bottomrule
\end{tabular}}
\label{tab:results}
\end{table*}

Fig. \ref{fig:traj} shows excerpts of system trajectories for VanderPol under standard ACI and our method with $\alpha=0.5$. When the system dynamics vary over time, both methods initially exit the admissible set. Our method accounts for violation severity and quickly recovers after a single violation, whereas standard ACI ignores severity, leading to slower recovery and ten violations over the same period.


\section{Conclusion and Limitations}
\label{sec:conclusion}
In this work, we introduced Cost-Aware ACI, a framework that overcomes the severity-blindness of standard ACI by integrating violation cost directly into a boosted loss function, and proposed a closed-loop method for online risk-aware controller synthesis in unknown and dynamic environments. Our theoretical analysis proved that this method provides simultaneous, non-asymptotic upper bounds on both long-run constraint violation frequency and average cumulative violation cost. Experiments show that our method substantially reduces both violation frequency and cost compared with standard ACI, with only a modest trade-off in task performance. This work provides a principled, verifiable way to manage not only the rate of failures but also severity, moving toward more meaningful runtime assurance for learning-based control. However, our method still has several limitations. It does not support complex temporal logic specifications, and similar to standard ACI, may lead to less smooth system behavior. Addressing these issues remains part of future work.



 \bibliographystyle{plain}
 \bibliography{reference}

\clearpage
\section*{Appendix}
\setcounter{secnumdepth}{2}
\setcounter{section}{0}
\renewcommand\thesubsection{\thesection.\arabic{subsection}}
\renewcommand\thesection{\Alph{section}}
\setcounter{lemma}{0}
\setcounter{theorem}{0}
\setcounter{corollary}{0}
\setcounter{proposition}{1}

\section{Proof}
\label{sec:app_proof}
\subsection{Proof of Lemma~\ref{lemma:boundedness}}
\begin{lemma}[Parameter Boundedness]
Let $\{\delta_k\}$ be generated by the update rule \eqref{eq:update_rule}. Under Assumptions \ref{assumption:cost} and \ref{assumption:feasibility}, if $\delta_1$ is initialized within the interval $[-\gamma(L_{\max} - \alpha), 1 + \gamma\alpha]$, then $\delta_k$ remains within this interval for all $k \in \mathbb{N}$.
\end{lemma}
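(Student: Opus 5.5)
We argue by induction on $k$, in the standard ACI fashion: the update \eqref{eq:update_rule} moves $\delta_k$ by a bounded step, and the definition of $\hat{Q}_k$ forces the sign of that step once $\delta_k$ leaves $[0,1]$. The base case $k=1$ is the hypothesis on $\delta_1$. For the inductive step we assume $\delta_k \in [-\gamma(L_{\max}-\alpha),\, 1+\gamma\alpha]$ and split into three cases according to where $\delta_k$ lies. Throughout we use Assumption \ref{assumption:cost}, which gives $0 \le L_k \le L_{\max} = 1+\beta$. Assumption \ref{assumption:feasibility} enters only to guarantee that the control step is well defined, so that $\bm{x}_k$, $s_k$ and $L_k$ exist at every step.

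\textbf{Case 1: $\delta_k > 1$.} By definition $\hat{Q}_k(\delta_k) = -\epsilon < 0 \le s_k$, since $s_k$ is an absolute value. Hence $e_k = 1$ and $L_k \ge 1 > \alpha$, so $\delta_{k+1} < \delta_k \le 1+\gamma\alpha$. For the lower bound,
\[
\delta_{k+1} \ge \delta_k - \gamma(L_{\max}-\alpha) > 1 - \gamma(L_{\max}-\alpha) \ge -\gamma(L_{\max}-\alpha).
\]

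\textbf{Case 2: $\delta_k < 0$.} Here $\hat{Q}_k(\delta_k) = M \ge s_k$ by Assumption \ref{assumption0}. So $e_k = 0$, $L_k = 0$, and $\delta_{k+1} = \delta_k + \gamma\alpha$. This is above $\delta_k \ge -\gamma(L_{\max}-\alpha)$ and below $\gamma\alpha \le 1+\gamma\alpha$.

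\textbf{Case 3: $0 \le \delta_k \le 1$.} No structural information is needed. The step satisfies $\gamma(\alpha - L_k) \in [-\gamma(L_{\max}-\alpha),\, \gamma\alpha]$, so
\[
\delta_{k+1} \in [-\gamma(L_{\max}-\alpha),\, 1+\gamma\alpha].
\]

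The main obstacle is Case 2. That case needs $s_k \le M$, i.e.\ Assumption \ref{assumption0}, which is not among the listed hypotheses. The lemma cites Assumptions \ref{assumption:cost} and \ref{assumption:feasibility}, so we invoke Assumption \ref{assumption0} as a standing assumption of the framework. It is exactly what makes the convention $\hat{Q}_k(\delta) = M$ for $\delta<0$ guarantee coverage.

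Two edge cases remain. The first is when $\delta_k$ equals a boundary value such as $1$ or $0$: these fall under Case 3, where the bounded-step argument suffices. The second is the first step, where the calibration set is empty. Algorithm \ref{alg:risk_sensitive_ACI} uses the placeholder set $\{-\epsilon, M\}$ there, and the same case analysis applies.
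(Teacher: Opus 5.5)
Your proof is correct and is essentially the paper's argument. The paper uses a first-exit contradiction instead of induction, but it splits on the same three regimes: a negative iterate gives $\hat{Q}=M$ and hence $L=0$, an iterate above $1$ gives $\hat{Q}=-\epsilon$ and hence $L\ge 1>\alpha$, and an iterate in $[0,1]$ moves by a bounded step. Your appeal to Assumption~\ref{assumption0} for $s_k\le M$ when $\delta_k<0$ is the right justification; the paper credits that step to feasibility, but $\widehat{h}\ge M$ alone does not imply $s_k\le M$.
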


\begin{proof}
\textbf{Lower bound:} We prove the lower bound by contradiction. Assume there is a first time $k_0$ such that $\delta_{k_0} < -\gamma(L_{\max} - \alpha)$. This implies $\delta_{k_0-1} \geq -\gamma(L_{\max} - \alpha)$ and that the loss $L_{k_0-1}$ must have been large enough to cause the drop. Specifically, $L_{k_0-1} > \alpha$.
\begin{enumerate}
\item If $\delta_{k_0-1} < 0$, then by definition $\hat{Q}_{k_0-1}(\delta_{k_0-1}) = M$. By Assumption \ref{assumption:feasibility}, the controller enforces $\widehat{h}(\bm{x}_{k_0-2}, \bm{u}_{k_0-2}) \ge M$, which implies $s_{k_0-1} \le M = \hat{Q}_{k_0-1}$, leading to $e_{k_0-1} = 0$ and $L_{k_0-1} = 0$. This contradicts $L_{k_0-1} > \alpha$. 
\item If $\delta_{k_0-1} \ge 0$, the minimum possible $\delta_{k_0}$ is $0 + \gamma(\alpha - L_{\max}) = -\gamma(L_{\max} - \alpha)$, establishing the contradiction.
\end{enumerate}

\noindent \textbf{Upper bound:} The argument follows symmetrically. If $\delta_{k_0} > 1 + \gamma\alpha$, consider the  minimal such $k_0$. For $\delta_{k_0-1} > 1$, we have $\hat{Q}_{k_0-1} = -\epsilon$ (since $1-\delta_{k_0-1} < 0$), forcing $e_{k_0} = 1$ and $L_{k_0} > 1 > \alpha$, which decreases $\delta_{k_0}$, a contradiction. For $\delta_{k_0-1} \le 1$, the maximum possible $\delta_{k_0}$ is $1 + \gamma\alpha$.
\end{proof}

\subsection{Proof of Lemma~\ref{lemma:convergence}}
\begin{lemma}[Long-Run Average Boosted Loss Convergence]
Under the same conditions as Lemma \ref{lemma:boundedness}, the long-run time-average of the boosted loss converges to the target level $\alpha$, i.e., $\lim_{T \to \infty} \frac{1}{T} \sum_{k=1}^{T} L_k = \alpha$.
\end{lemma}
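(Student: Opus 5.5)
The plan is the standard telescoping argument for ACI-type updates, with Lemma~\ref{lemma:boundedness} supplying the needed control on the iterates.

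\textbf{Step 1 (telescoping).} Unroll the update rule \eqref{eq:update_rule} from $k=1$ to $k=T$. Summing $\delta_{k+1}-\delta_k=\gamma(\alpha-L_k)$ telescopes to
\begin{equation*}
\delta_{T+1}-\delta_1=\gamma\Big(T\alpha-\sum_{k=1}^{T}L_k\Big).
\end{equation*}
Since $\gamma>0$, dividing by $\gamma T$ and rearranging gives the exact identity
\begin{equation*}
\frac{1}{T}\sum_{k=1}^{T}L_k-\alpha=\frac{\delta_1-\delta_{T+1}}{\gamma T}.
\end{equation*}
This identity is also exactly the expression that reappears in Corollary~\ref{coro:violation}.

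\textbf{Step 2 (bounding the right-hand side).} By Lemma~\ref{lemma:boundedness}, both $\delta_1$ and $\delta_{T+1}$ lie in the fixed interval $[-\gamma(L_{\max}-\alpha),\,1+\gamma\alpha]$, where $L_{\max}=1+\beta$ comes from Assumption~\ref{assumption:cost}. Hence $|\delta_1-\delta_{T+1}|\le 1+\gamma L_{\max}$. It follows that
\begin{equation*}
\Big|\frac{1}{T}\sum_{k=1}^{T}L_k-\alpha\Big|\le\frac{1+\gamma(1+\beta)}{\gamma T}.
\end{equation*}
Letting $T\to\infty$ gives the claimed limit, at rate $O(1/T)$.

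\textbf{Main obstacle.} The argument is entirely algebraic, so essentially all of the difficulty sits in the boundedness lemma, specifically in its lower bound. That bound uses Assumption~\ref{assumption:feasibility} to argue that whenever $\delta_k<0$ the margin is $\hat{Q}=M$, so coverage is forced and $L_k=0$. Here the only thing I need to check is the index bookkeeping, namely that $\delta_k$ determines $e_k$ through $\hat{Q}_k(\delta_k)$, and that the controller at step $k-1$ enforced exactly that margin. Given Lemma~\ref{lemma:boundedness} as stated, the convergence statement follows immediately. No probabilistic assumption on the data-generating process is required; the argument holds pathwise.
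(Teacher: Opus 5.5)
Your proof is correct and follows the paper's argument: telescope \eqref{eq:update_rule} to the identity $\frac{1}{T}\sum_{k=1}^{T}L_k-\alpha=\frac{\delta_1-\delta_{T+1}}{\gamma T}$, then bound the numerator by the width $1+\gamma L_{\max}$ of the invariant interval from Lemma~\ref{lemma:boundedness}. On your ``main obstacle'' paragraph, the forced coverage when $\hat{Q}_k(\delta_k)=M$ comes from Assumption~\ref{assumption0} ($s_k\le M$) rather than from feasibility, but that concerns Lemma~\ref{lemma:boundedness}, not this step.
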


\begin{proof}
Expanding the recursion yields, we obtain $\delta_{T+1} = \delta_1 + \gamma \sum_{k=1}^{T} (\alpha - L_k)$. Rearranging and dividing by $T\gamma$, we obtain $\frac{1}{T} \sum_{k=1}^{T} L_k - \alpha = \frac{\delta_1- \delta_{T+1}}{T\gamma}$. From Lemma \ref{lemma:boundedness}, both $\delta_1$ and $\delta_{T+1}$ are bounded. Let $B = 1 + \gamma\alpha + \gamma(L_{\max} - \alpha)$ be the interval width. Then, we have
\begin{equation*}
    \left| \frac{1}{T} \sum_{k=1}^{T} L_k - \alpha \right| \le \frac{B}{T\gamma} \to 0 \quad \text{as } T \to \infty.
\end{equation*}
The proof is completed.
\end{proof}

\subsection{Proof of Theorem~\ref{thm:validity}}

\begin{theorem}[Upper Bound on Constraint Violation Frequency]
Let $v_k:=\mathbb{I}(h(\bm{x}_k) < 0)$ be the indicator of a constraint violation at time $k$. Under Assumptions \ref{assumption0} and \ref{assumption:cost}, the  violation frequency $V$ is upper-bounded by $\alpha$:
\begin{equation*}
V = \limsup_{T \to \infty} \frac{1}{T} \sum_{k=1}^{T} v_k \le \alpha.
\end{equation*}
\end{theorem}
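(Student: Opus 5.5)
The plan is to show that, because of the robust constraint in \eqref{eq:robust_constraint}, a constraint violation can only happen at a step where the conformal set miscovers. This gives the pointwise domination $v_k \le e_k \le L_k$, and then Lemma \ref{lemma:convergence} finishes the argument.

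\textbf{Step 1 (violation implies miscoverage).} Fix $k \ge 1$ and suppose $v_k = 1$, i.e.\ $h(\bm{x}_k) < 0$. The input $\bm{u}_{k-1} = \bm{u}^*_{k-1}$ was obtained from \eqref{eq:robust_constraint} using the margin $\hat{Q}_k = \hat{Q}_k(\delta_k)$, so it satisfies $\widehat{h}(\bm{x}_{k-1}, \bm{u}_{k-1}) \ge \hat{Q}_k(\delta_k)$. Well-posedness of this step requires feasibility (Assumption \ref{assumption:feasibility}), which is already among the standing conditions of Lemma \ref{lemma:boundedness} and Lemma \ref{lemma:convergence}. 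I would invoke it implicitly, together with Assumption \ref{assumption0}, which guarantees that the margin $M$ used when $\delta_k<0$ is finite. By \eqref{eq:safety_score} we have $h(\bm{x}_k) \ge \widehat{h}(\bm{x}_{k-1},\bm{u}_{k-1}) - s_k \ge \hat{Q}_k(\delta_k) - s_k$. If $e_k = 0$, i.e.\ $s_k \le \hat{Q}_k(\delta_k)$, this gives $h(\bm{x}_k) \ge 0$, a contradiction. Hence $v_k \le e_k$ for every $k$.

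I would check the degenerate branches of $\hat{Q}_k$ explicitly. When $\delta_k > 1$, $\hat{Q}_k = -\epsilon < 0 \le s_k$, so $e_k = 1$ and the inequality is trivial. When $\delta_k < 0$, $\hat{Q}_k = M$, and the inequality follows from the same chain of bounds as in the general case.

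\textbf{Step 2 (domination by the boosted loss).} Since $\beta \ge 0$ and $c(\bm{x}_k) \ge 0$ (Assumption \ref{assumption:cost}), \eqref{eq:boosted_loss} gives $L_k = e_k(1+\beta c(\bm{x}_k)) \ge e_k \ge v_k$. Summing over $k$ yields $\frac{1}{T}\sum_{k=1}^T v_k \le \frac{1}{T}\sum_{k=1}^T L_k$. The telescoped identity from the proof of Lemma \ref{lemma:convergence} then gives
\[
\frac{1}{T}\sum_{k=1}^T v_k \;\le\; \alpha + \frac{\delta_1 - \delta_{T+1}}{T\gamma} \;\le\; \alpha + \frac{B}{T\gamma}.
\]
Taking $\limsup_{T\to\infty}$ gives $V \le \alpha$. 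This also produces the finite-time bound of Corollary \ref{coro:violation}.

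The only substantive step is Step 1: verifying that the robust constraint transfers the coverage event $s_k \le \hat{Q}_k$ into the admissibility event $h(\bm{x}_k) \ge 0$. The points to watch there are the index alignment (the margin used at step $k-1$ is exactly $\hat{Q}_k(\delta_k)$, the one that defines $e_k$) and the boundary cases of the quantile definition.
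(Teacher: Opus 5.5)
Your proposal is correct and follows the paper's proof essentially verbatim: the robust constraint forces $v_k=1 \Rightarrow e_k=1$, so $v_k \le e_k \le L_k$, and Lemma~\ref{lemma:convergence} (via the telescoped identity) finishes the argument.

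Your contrapositive chain $h(\bm{x}_k) \ge \hat{Q}_k(\delta_k) - s_k$ is slightly more careful than the paper's version. It does not rely on the paper's claim $\hat{Q}_k \ge 0$, which fails in the branch $\delta_k>1$ (there $\hat{Q}_k=-\epsilon$, but $e_k=1$ trivially). You also correctly note that feasibility (Assumption~\ref{assumption:feasibility}) is used implicitly, even though the statement lists only Assumptions~\ref{assumption0} and~\ref{assumption:cost}.
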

\begin{proof}
A constraint violation at time $k$ ($v_k=1$, so $h(\bm{x}_k)<0$) implies a miscoverage event ($e_k=1$). This is because the controller enforces $\widehat{h}(\bm{x}_{k-1}, \bm{u}_{k-1}) \geq \hat{Q}_k \ge 0$. Thus, a violation implies:
\[
s_k = |h(\bm{x}_k) - \widehat{h}(\bm{x}_{k-1}, \bm{u}_{k-1})| = \widehat{h}(\bm{x}_{k-1}, \bm{u}_{k-1}) - h(\bm{x}_k) > \hat{Q}_k - 0 = \hat{Q}_k.
\]
This means $v_k=1 \implies e_k=1$, so $v_k \le e_k$. Furthermore, by definition, $L_k = e_k(1+\beta c(\bm{x}_k)) \ge e_k$. Combining these inequalities, we have $\frac{1}{T} \sum_{k=1}^{T} v_k \le \frac{1}{T} \sum_{k=1}^{T} e_k \le \frac{1}{T} \sum_{k=1}^{T} L_k$.
Taking the limit superior as $T \to \infty$ and applying Lemma \ref{lemma:convergence}, we obtain $\limsup_{T \to \infty} \frac{1}{T} \sum_{k=1}^{T} v_k \le \limsup_{T \to \infty} \frac{1}{T} \sum_{k=1}^{T} L_k = \alpha$.
The proof is completed. 
\end{proof}

\subsection{Proof of Theorem~\ref{thm:cost-bound}}

\begin{theorem}[Upper Bounds on Average Cumulative Cost $J$]
Under Assumptions \ref{assumption0} and \ref{assumption:cost}, for any risk sensitivity parameter $\beta > 0$, the long-run average cumulative constraint violation cost $J$ satisfies:
\begin{equation*}
    J= \limsup_{T \to \infty} \frac{1}{T} \sum_{k=1}^{T}  c(\bm{x}_k) \le \frac{\alpha}{\beta}.
\end{equation*}
\end{theorem}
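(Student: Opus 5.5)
The plan is to reuse the pointwise comparison from the proof of Theorem~\ref{thm:validity} and then bound the cost directly by the boosted loss. The key observation there is that a constraint violation forces a miscoverage. If $c(\bm{x}_k)>0$, then $h(\bm{x}_k)<0$. Since the controller enforces $\widehat{h}(\bm{x}_{k-1},\bm{u}_{k-1})\ge \hat{Q}_k\ge 0$, we get $s_k>\hat{Q}_k$, hence $e_k=1$. In every case, therefore, $c(\bm{x}_k)=e_k\,c(\bm{x}_k)$.

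From the definition \eqref{eq:boosted_loss}, this identity gives
\[
L_k = e_k + \beta e_k c(\bm{x}_k) = e_k + \beta c(\bm{x}_k) \ge \beta c(\bm{x}_k),
\]
using $e_k\ge 0$. Dividing by $\beta>0$ and averaging over $k=1,\dots,T$ yields
\[
\frac{1}{T}\sum_{k=1}^T c(\bm{x}_k)\le \frac{1}{\beta T}\sum_{k=1}^T L_k - \frac{1}{\beta T}\sum_{k=1}^T e_k \le \frac{1}{\beta T}\sum_{k=1}^T L_k .
\]
Taking $\limsup_{T\to\infty}$ and invoking Lemma~\ref{lemma:convergence}, the right-hand side converges to $\alpha/\beta$, which gives $J\le \alpha/\beta$.

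The identity $L_k=e_k+\beta c(\bm{x}_k)$ also delivers the finite-time form in Corollary~\ref{coro:violation}. Substituting $\frac{1}{T}\sum_k L_k=\alpha+\frac{\delta_1-\delta_{T+1}}{T\gamma}$ from the unrolled recursion gives exactly the stated bound on $J_T$.

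The main obstacle is justifying the implication $c(\bm{x}_k)>0\Rightarrow e_k=1$. This requires $\hat{Q}_k\ge 0$ and feasibility of the robust constraint. Nonnegativity of $\hat{Q}_k$ holds whenever $\delta_k\le 1$, because the scores are nonnegative (and $\hat{Q}_k=M>0$ when $\delta_k<0$). The problematic case is $\delta_k>1$, where $\hat{Q}_k=-\epsilon$. There I would argue as follows: in that case $e_k=1$ is automatic, since $s_k\ge 0>-\epsilon$. So the implication holds regardless, and the comparison $c(\bm{x}_k)\le e_k c(\bm{x}_k)$ survives in all regimes. Lemma~\ref{lemma:boundedness} is needed only to ensure that Lemma~\ref{lemma:convergence} applies.
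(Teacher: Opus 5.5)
Your proposal is correct and follows the paper's proof. Both arguments use the fact from Theorem~\ref{thm:validity} that a violation forces a miscoverage, which gives $c(\bm{x}_k)=e_k c(\bm{x}_k)$ and hence $\beta\,c(\bm{x}_k)\le L_k$; both then average over $k$ and invoke Lemma~\ref{lemma:convergence}. Your explicit treatment of the case $\hat{Q}_k<0$, and your note that feasibility (Assumption~\ref{assumption:feasibility}) is used, make precise steps the paper glosses over. In fact the sign of $\hat{Q}_k$ never matters: when $h(\bm{x}_k)<0$ you get $s_k\ge \widehat{h}(\bm{x}_{k-1},\bm{u}_{k-1})-h(\bm{x}_k)>\hat{Q}_k$ directly. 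This also covers $\delta_k=1$, where the literal quantile is $-\infty$, so your claim that $\hat{Q}_k\ge 0$ for all $\delta_k\le 1$ fails at that endpoint.
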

\begin{proof}
From Lemma \ref{lemma:convergence} and the definition of $L_k$, we have $\alpha = \lim_{T \to \infty} \frac{1}{T} \sum_{k=1}^{T} L_k 
    = \lim_{T \to \infty} \left( \frac{1}{T} \sum_{k=1}^{T} e_k + \beta \frac{1}{T} \sum_{k=1}^{T} e_k c(\bm{x}_k) \right)$. Since both terms are nonnegative and $c(\bm{x}_k)=0$ when $e_k=0$, we have
    $\beta J\leq \beta \limsup_{T \to \infty} \frac{1}{T} \sum_{k=1}^{T} e_k c(\bm{x}_k) \le \alpha$.
Dividing both sides by $\beta$, we obtain 
\[
\limsup_{T \to \infty} \frac{1}{T} \sum_{k=1}^{T} e_k c(\bm{x}_k) \le \frac{\alpha}{\beta}.
\]
Since $c(\bm{x}_k)=0$ whenever $e_k=0$, it follows that
\[
\limsup_{T \to \infty} \frac{1}{T} \sum_{k=1}^{T} c(\bm{x}_k) \le \frac{\alpha}{\beta}.
\]
This completes the proof.
\end{proof}

\subsection{Proof of Corollary~\ref{coro:violation}}

\begin{corollary}[Finite-Time Violation Frequency and Cost Bounds]
Let $V_T:=\frac{1}{T}\sum_{k=1}^T v_k$ be the violation frequency and $J_T:=\frac{1}{T}\sum_{k=1}^T c(\bm{x}_k)$ be the average cumulative cost over the first $T$ time steps. The following bounds hold:
\begin{enumerate}
\item \textbf{Violation Frequency:} $V_T \le \alpha + \frac{\delta_1 - \delta_{T+1}}{T\gamma} = \alpha + O(1/T)$.
\item \textbf{Average Cost:} $J_T \le \frac{\alpha}{\beta} + \frac{\delta_1 - \delta_{T+1}}{\beta T\gamma} - \frac{1}{\beta T} \sum_{k=1}^T e_k \le \frac{\alpha}{\beta} + O(1/T)$.
\end{enumerate}
\end{corollary}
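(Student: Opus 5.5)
The plan is to work entirely with the finite-horizon telescoped identity from the proof of Lemma~\ref{lemma:convergence}, without taking limits. Unrolling the update rule \eqref{eq:update_rule} from $k=1$ to $T$ gives
\[
\delta_{T+1} = \delta_1 + \gamma \sum_{k=1}^{T}(\alpha - L_k).
\]
Rearranging, this becomes
\[
\frac{1}{T}\sum_{k=1}^T L_k = \alpha + \frac{\delta_1-\delta_{T+1}}{T\gamma}.
\]
This is an exact equality for every $T$. Both claims should then follow from pointwise inequalities between $v_k$, $c(\bm{x}_k)$, $e_k$ and $L_k$, with boundedness of $\delta_k$ supplying the $O(1/T)$ rate.

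\textbf{Violation frequency.} I would reuse the per-step argument from the proof of Theorem~\ref{thm:validity}. The robust constraint in \eqref{eq:robust_constraint} enforces $\widehat{h}(\bm{x}_{k-1},\bm{u}_{k-1})\ge \hat{Q}_k\ge 0$. Hence $h(\bm{x}_k)<0$ forces $s_k>\hat{Q}_k$, so $v_k\le e_k$. Since $c\ge 0$ and $\beta\ge 0$, we also have $e_k\le L_k$. Summing, dividing by $T$, and substituting the identity above gives
\[
V_T\le \alpha+\frac{\delta_1-\delta_{T+1}}{T\gamma}.
\]
By Lemma~\ref{lemma:boundedness}, $|\delta_1-\delta_{T+1}|\le B=1+\gamma\alpha+\gamma(L_{\max}-\alpha)$, which is independent of $T$. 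The remainder is therefore at most $B/(T\gamma)=O(1/T)$.

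\textbf{Average cost.} Expand $L_k=e_k+\beta e_k c(\bm{x}_k)$. Then observe that $c(\bm{x}_k)>0$ implies $v_k=1$, which by the step above implies $e_k=1$. So $c(\bm{x}_k)=e_kc(\bm{x}_k)$ exactly, and
\[
\beta J_T=\frac{1}{T}\sum_{k=1}^T L_k-\frac{1}{T}\sum_{k=1}^T e_k.
\]
Substituting the telescoped identity and dividing by $\beta>0$ gives the stated middle expression; in fact it holds with equality. For the final inequality, drop the nonpositive term $-\frac{1}{\beta T}\sum_k e_k$ and bound the remainder by $B/(\beta T\gamma)$, again using Lemma~\ref{lemma:boundedness}.

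\textbf{Main obstacle.} The delicate step is the implication $v_k\Rightarrow e_k$. It requires $\hat{Q}_k\ge 0$ and that the robust constraint was actually enforced at step $k-1$ (Assumption~\ref{assumption:feasibility}). When $\delta_k>1$, the definition gives $\hat{Q}_k=-\epsilon<0$, and the argument needs care. I would handle this case separately. If $\hat{Q}_k=-\epsilon$, then $s_k\ge 0>\hat{Q}_k$, so $e_k=1$ automatically and $v_k\le e_k$ still holds. If $\hat{Q}_k\ge 0$, the constraint argument applies. In both cases $c(\bm{x}_k)\le e_k c(\bm{x}_k)$, so the bounds hold for every $T\ge 1$.
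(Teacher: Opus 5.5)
Your proposal is correct and follows the paper's proof: telescope the update to get the exact identity $\frac{1}{T}\sum_{k=1}^T L_k=\alpha+\frac{\delta_1-\delta_{T+1}}{T\gamma}$, then use $v_k\le e_k\le L_k$ and $c(\bm{x}_k)=e_k c(\bm{x}_k)$, with Lemma~\ref{lemma:boundedness} giving the $O(1/T)$ rate. Your case split on the sign of $\hat{Q}_k$ patches a step the paper glosses over when it simply asserts $\hat{Q}_k\ge 0$: any negative threshold forces $e_k=1$ because $s_k\ge 0$, which also covers the value $-\infty$ at $\delta_k=1$. Your remark that the cost bound actually holds with equality is also correct.
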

\begin{proof}
The proof follows directly from the relationships $v_k \le e_k \le L_k$ and $c(\bm{x}_k) \le e_k c(\bm{x}_k)$. Summing the core equality $\frac{1}{T}\sum L_k = \alpha + \frac{\delta_1 - \delta_{T+1}}{T\gamma}$ and rearranging terms yield the desired bounds. The $O(1/T)$ rates are a consequence of the boundedness of $\delta_k$ established in Lemma \ref{lemma:boundedness}.
\end{proof}

\subsection{Proof of Proposition~\ref{thm:conservation}}
\begin{proposition}[Conservation of Risk Budget]
The algorithm dynamically trades off miscoverage frequency against violation severity. Let $\hat{\alpha}_T:=\frac{1}{T}\sum_{k=1}^T e_k$ be the empirical miscoverage rate and $\bar{c}_T:= \frac{\sum_{k=1}^T e_k c(\bm{x}_k)}{\sum_{k=1}^T e_k}$ be the average cost conditional on a miscoverage event. In the long run, these quantities adhere to the conservation law:
$\lim_{T \to \infty} \hat{\alpha}_T (1 + \beta \bar{c}_T) = \alpha$.
\end{proposition}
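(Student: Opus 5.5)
The plan is to reduce the statement to Lemma~\ref{lemma:convergence} by an algebraic identity. I would first show that $\hat{\alpha}_T(1+\beta\bar{c}_T)$ is exactly the running average of the boosted loss. Write $E_T:=\sum_{k=1}^T e_k$. Whenever $E_T>0$,
\[
\hat{\alpha}_T\,\beta\bar{c}_T=\frac{E_T}{T}\cdot\beta\,\frac{\sum_{k=1}^T e_k c(\bm{x}_k)}{E_T}=\frac{\beta}{T}\sum_{k=1}^T e_k c(\bm{x}_k).
\]
Hence
\[
\hat{\alpha}_T(1+\beta\bar{c}_T)=\frac{1}{T}\sum_{k=1}^T e_k\bigl(1+\beta c(\bm{x}_k)\bigr)=\frac{1}{T}\sum_{k=1}^T L_k ,
\]
using the definition \eqref{eq:boosted_loss} of $L_k$. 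The key point is that the factor $E_T$ cancels, so no information about the individual limits of $\hat{\alpha}_T$ or $\bar{c}_T$ is needed. This matters because neither limit need exist in a time-varying environment; only the product is controlled.

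Next I would apply Lemma~\ref{lemma:convergence}, which gives $\frac1T\sum_{k=1}^T L_k\to\alpha$ under the hypotheses of Lemma~\ref{lemma:boundedness}. That lemma also yields the quantitative rate $|\hat{\alpha}_T(1+\beta\bar{c}_T)-\alpha|\le B/(T\gamma)$, with $B$ the interval width from the proof of Lemma~\ref{lemma:convergence}.

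The only real obstacle is that $\bar{c}_T$ is undefined when $E_T=0$. I would handle this by showing that $E_T>0$ for all sufficiently large $T$, so the conservation law is well-posed eventually. Suppose instead that $E_T=0$ for arbitrarily large $T$. Since $E_T$ is nondecreasing, this forces $e_k=0$ for every $k$, hence $L_k\equiv 0$. Then $\frac1T\sum_{k=1}^T L_k=0$ for all $T$, contradicting Lemma~\ref{lemma:convergence} because $\alpha>0$. Equivalently, the recursion would give $\delta_{T+1}=\delta_1+\gamma\alpha T\to\infty$, violating the bound of Lemma~\ref{lemma:boundedness}. So some miscoverage occurs, and from that time on $\bar{c}_T$ is defined and the identity holds.

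Taking $T\to\infty$ along these times then gives $\lim_{T\to\infty}\hat{\alpha}_T(1+\beta\bar{c}_T)=\alpha$. I would close with a remark that the convention $\bar{c}_T:=0$ when $E_T=0$ makes the identity valid for all $T$ as well.
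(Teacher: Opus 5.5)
Your proposal is correct and follows the paper's own proof. Like the paper, you factor $\sum_{k=1}^T e_k$ out of $\sum_{k=1}^T L_k$ to get the identity $\hat{\alpha}_T(1+\beta\bar{c}_T)=\frac{1}{T}\sum_{k=1}^T L_k$, and then invoke Lemma~\ref{lemma:convergence}. Your extra step showing $E_T>0$ for all large $T$ (so that $\bar{c}_T$ is well defined) covers a point the paper's proof silently skips, and your argument for it is sound.
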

\begin{proof}
Decomposing the sum of the boosted loss gives $\sum_{k=1}^{T} L_k = \sum_{k=1}^{T} e_k + \beta \sum_{k=1}^{T} e_k c(\bm{x}_k)$. We can factor out the term $\sum e_k$:
\[
\sum_{k=1}^{T} L_k = \left(\sum_{k=1}^{T} e_k\right) \left(1 + \beta \frac{\sum_{k=1}^{T} e_k c(\bm{x}_k)}{\sum_{k=1}^{T} e_k}\right) = T\hat{\alpha}_T (1 + \beta \bar{c}_T).
\]
Dividing by $T$ and applying Lemma \ref{lemma:convergence} yield the result.
\end{proof}

\subsection{Proof of Proposition~\ref{thm:regret}}
\begin{proposition}[Regret Minimization Perspective]
The update rule $\delta_{k+1}=\delta_k+\gamma (\alpha-L_k)$ is an instance of Online Gradient Descent (OGD) on the sequence of linearized loss functions $\mathcal{L}_k(\delta) = (L_k - \alpha)\delta$. Consequently, the algorithm achieves sublinear regret against the best fixed parameter $\delta^\star$ in hindsight:
\begin{equation*}
    \text{Regret}_T = \sum_{k=1}^{T} \mathcal{L}_k(\delta_{k}) - \min_{\delta^\star \in [0,1]} \sum_{k=1}^{T} \mathcal{L}_k(\delta^\star) \le O(\sqrt{T}).
\end{equation*}
\end{proposition}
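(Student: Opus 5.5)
Because the losses $\mathcal{L}_k(\delta)=(L_k-\alpha)\delta$ are linear, the argument is short. Their gradient is $\nabla\mathcal{L}_k(\delta)=L_k-\alpha$, so the OGD step $\delta_{k+1}=\delta_k-\gamma\nabla\mathcal{L}_k(\delta_k)$ is exactly the update \eqref{eq:update_rule}. No projection is used. This is fine, because Lemma~\ref{lemma:boundedness} already confines the iterates to $I:=[-\gamma(L_{\max}-\alpha),1+\gamma\alpha]$, an interval of width $B$ as in the proof of Lemma~\ref{lemma:convergence}. I would therefore bypass the generic projected-OGD analysis and compute the regret directly from the linear structure.

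Fix any comparator $\delta^\star\in[0,1]$. Linearity gives
$\sum_{k=1}^T\mathcal{L}_k(\delta_k)-\sum_{k=1}^T\mathcal{L}_k(\delta^\star)=\sum_{k=1}^T(L_k-\alpha)(\delta_k-\delta^\star)$.
I would use the standard one-step identity
$\|\delta_{k+1}-\delta^\star\|^2=\|\delta_k-\delta^\star\|^2-2\gamma(L_k-\alpha)(\delta_k-\delta^\star)+\gamma^2(L_k-\alpha)^2$,
which holds exactly because there is no projection. Solving for the cross term and telescoping gives
$\sum_k(L_k-\alpha)(\delta_k-\delta^\star)=\frac{(\delta_1-\delta^\star)^2-(\delta_{T+1}-\delta^\star)^2}{2\gamma}+\frac{\gamma}{2}\sum_k(L_k-\alpha)^2$.

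Two bounds finish the estimate. Since $\delta_1,\delta^\star\in I$, the first term is at most $B^2/(2\gamma)$. By Assumption~\ref{assumption:cost}, $|L_k-\alpha|\le G:=\max(\alpha,L_{\max}-\alpha)\le L_{\max}$, so the second term is at most $\gamma G^2T/2$. Taking the maximum over $\delta^\star$ yields
$\text{Regret}_T\le \frac{B^2}{2\gamma}+\frac{\gamma G^2 T}{2}$.

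The main obstacle is that this bound is linear in $T$ for a fixed step size $\gamma$. Reaching $O(\sqrt T)$ by the usual route means taking $\gamma\propto 1/\sqrt T$ (or $\gamma_k\propto1/\sqrt k$ with a doubling trick or the standard variable-step analysis). That gives $\text{Regret}_T\le BG\sqrt T$ up to constants, with $B$ kept uniformly bounded, e.g.\ for $\gamma\le1$.

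For the fixed-$\gamma$ algorithm exactly as stated, I would instead argue directly. By the telescoping identity from the proof of Lemma~\ref{lemma:convergence},
$\sum_k\mathcal{L}_k(\delta^\star)=\delta^\star\sum_k(L_k-\alpha)=\delta^\star(\delta_1-\delta_{T+1})/\gamma=O(1)$.
So the comparator term is uniformly bounded, and what remains is to control the learner's term $\sum_k(L_k-\alpha)\delta_k$. Summation by parts with $L_k-\alpha=(\delta_k-\delta_{k+1})/\gamma$ gives
$\sum_k(L_k-\alpha)\delta_k=\frac{1}{2\gamma}(\delta_1^2-\delta_{T+1}^2)+\frac{1}{2\gamma}\sum_k(\delta_{k+1}-\delta_k)^2$.
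The last sum is $\frac{\gamma}{2}\sum_k(L_k-\alpha)^2$ again, so this route also leaves an $O(\gamma T)$ term. I would state the result with $\gamma=\Theta(1/\sqrt T)$ (or a decaying schedule $\gamma_k\propto1/\sqrt k$), noting that Lemma~\ref{lemma:boundedness} continues to guarantee bounded iterates in that case, so the choice of step size is the point the final write-up must make explicit.
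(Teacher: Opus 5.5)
Your computation is correct, and it goes further than the paper's proof. The paper only notes that the update is OGD on $\mathcal{L}_k$ and then cites ``the standard result for OGD on bounded linear losses.'' That standard $O(\sqrt{T})$ bound is exactly your $\frac{B^2}{2\gamma}+\frac{\gamma G^2T}{2}$ with $\gamma$ tuned to $\Theta(1/\sqrt{T})$ (or $\gamma_k\propto 1/\sqrt{k}$). The proposition and Algorithm~\ref{alg:risk_sensitive_ACI}, however, use a constant $\gamma$ (e.g.\ $\gamma=0.01$ in all experiments), so the citation does not cover the claim as stated.

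You can sharpen your ``main obstacle'' into a disproof. Your summation-by-parts identity is exact, so
\[
\text{Regret}_T=\frac{\delta_1^2-\delta_{T+1}^2}{2\gamma}+\frac{\gamma}{2}\sum_{k=1}^{T}(L_k-\alpha)^2+\max\Bigl\{0,\frac{\delta_{T+1}-\delta_1}{\gamma}\Bigr\}.
\]
Since $L_k\in\{0\}\cup[1,L_{\max}]$ and $\alpha\in(0,1)$, every $(L_k-\alpha)^2\ge\min(\alpha,1-\alpha)^2$. Lemma~\ref{lemma:boundedness} gives $\delta_{T+1}^2\le(1+\gamma L_{\max})^2$. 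Hence $\text{Regret}_T\ge\frac{\gamma}{2}\min(\alpha,1-\alpha)^2\,T-\frac{(1+\gamma L_{\max})^2}{2\gamma}$, which is $\Theta(T)$ for every fixed $\gamma>0$. The proposition is therefore false for the algorithm as written. Your switch to a horizon-tuned or decaying step size is a necessary correction of the statement, not a presentational choice; say so explicitly, with this lower bound as the justification.

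Two smaller points. First, for a constant (horizon-tuned) $\gamma$ you do not need Lemma~\ref{lemma:boundedness} in the upper bound. The unprojected identity holds for every comparator, and $(\delta_1-\delta^\star)^2\le 1$ because $\delta_1=\alpha$ and $\delta^\star\in[0,1]$. Bounded iterates are genuinely needed only for $\gamma_k\propto 1/\sqrt{k}$, where telescoping leaves terms $\bigl(\frac{1}{2\gamma_k}-\frac{1}{2\gamma_{k-1}}\bigr)(\delta_k-\delta^\star)^2$. There the argument of Lemma~\ref{lemma:boundedness} goes through with $\gamma$ replaced by $\sup_k\gamma_k$, as you say.

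Second, your correction exposes a trade-off the paper does not mention. The transient in Lemma~\ref{lemma:convergence} and Corollary~\ref{coro:violation} is exactly $(\delta_1-\delta_{T+1})/(T\gamma)$, which degrades to $O(1/\sqrt{T})$ under $\gamma=\Theta(1/\sqrt{T})$. A $T$-independent $\gamma$ keeps the $O(1/T)$ rate but has $\Theta(\gamma T)$ regret. So the advertised $O(1/T)$ rate and the $O(\sqrt{T})$ regret cannot both be claimed for the same step size.
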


\begin{proof}
The gradient of the loss is $\nabla \mathcal{L}_k(\delta) = L_k - \alpha$. The OGD update is $\delta_{k+1} = \delta_k - \gamma \nabla \mathcal{L}_k(\delta_k) = \delta_k - \gamma(L_k - \alpha)$, which matches our update rule exactly. The $O(\sqrt{T})$ regret bound is a standard result for OGD on a sequence of bounded linear losses.
\end{proof}

\section{Practical Implementation Strategies}
\label{sec:app_alg}
In this subsection, we provide further intuition and implementation details for Algorithm~\ref{alg:risk_sensitive_ACI}. We begin by outlining its main workflow. Algorithm~\ref{alg:risk_sensitive_ACI} operates in a continuous loop:

1. \textbf{Feedback and Adaptation ($k>0$)}: Upon observing state $\bm{x}_k$, the algorithm computes the nonconformity score $s_k$ and the cost-aware loss $L_k$. This loss is used to update the adaptive parameter to $\delta_{k+1}$.

2. \textbf{Calibration}: Using the updated history of scores $\mathcal{C}$ and the new parameter $\delta_{k+1}$, the algorithm re-computes the assurance threshold $\hat{Q}_{k+1}$.

3. \textbf{Admissible Control Synthesis}: The algorithm solves the optimization problem \eqref{eq:robust_constraint} using the new threshold $\hat{Q}_{k+1}$ to determine the optimal, admissible control action $\bm{u}_k^*$. This action is then applied to the system.

Next, we discuss several practical implementation strategies. To ensure real-time feasibility in high-frequency control loops, two modifications can be applied to Alg. \ref{alg:risk_sensitive_ACI}:

\textbf{Sparse Model Updates:} The surrogate models $\widehat{h}$ and $\widehat{\bm{f}}$ need not be updated at every time step $k$. In practice, they are often updated via \textit{batch learning} (e.g., every $N$ steps) or \textit{event-triggered learning} (e.g., only when prediction error exceeds a threshold) to reduce computational overhead without significantly degrading prediction accuracy. 

\textbf{Sliding Window Calibration:} To ensure adaptivity and computational tractability, the calibration set $\mathcal{C}$ is implemented as a \textit{sliding window} of fixed size $W$. This method bounds the complexity of the quantile computation and, more importantly, enhances responsiveness to non-stationarity. As the surrogate model $\widehat{h}$ is updated online, the window naturally discards ``stale'' residuals from prior model versions. This ensures the uncertainty threshold $\hat{Q}_k$ adapts to local distribution shifts and always reflects the performance of the \textit{current} model.


\section{Additional Experimental Results and Implementation Details}
\label{sec:app_exp}

All computations are performed on a machine equipped with an Intel i7-13700H 2.10 GHz CPU and 32 GB of RAM. The MPC optimization problem is solved using CasADi \cite{andersson2018casadi} in conjunction with the Ipopt solver \cite{wachter2006implementation}. The 10 random seeds used in Fig.~\ref{fig:results} are $0,1,\ldots,9$. For all other experiments, the random seed is fixed to 0.

\subsection{Implementation Details}
\textbf{Vanderpol.} Consider the VanderPol oscillator adapted from \cite{henrion2013convex},
\begin{equation*}
    \begin{cases}
    x_{k+1}=x_k + 0.01(-2y_k + u_{k} + p_k),\\
    y_{k+1}=y_k + 0.01(0.8x_k-10(x_k^2-0.21)y_k+v_{k} + q_k),
    \end{cases}
\end{equation*}
where the state space $\mathcal{X} = \{(x,y)\mid 2-x^2-y^2\geq 0\}$, the admissible set $\mathcal{S} = \{(x,y)\mid 1 - x^2 - y^2\geq 0\}$, and the control set $\mathcal{U} = \{(u,v)\mid -40 \leq u,v\leq 40\}$. The parameter variation $p_{k}$ and $q_{k}$ are modeled as piecewise-constant, time-varying signals. Every 10 time steps, each parameter variation is independently resampled from $14\times \text{Beta}(0.1,0.1)-7$ and kept constant until the next update. This setting captures slowly varying dynamics in the system model. In this benchmark, the MPC task objective $J_\text{task}$ encourages the system state to remain as close as possible to $(0.8, 0.8)$, which directly conflicts with the constraints.

\noindent\textbf{Pendulum.} Consider the \textit{Pendulum} benchmark adapted from GYM \cite{towers2024gymnasium},
\begin{equation*}
    \begin{cases}
    \dot{\theta}_{k+1}=\dot{\theta}_{k} + 0.05(15\sin{(\theta_{k})}  + 3u_{k} + r_{k} + q_{k}),\\
    \theta_{k+1}=\theta_{k} + 0.05\dot{\theta}_{k+1},
    \end{cases}
\end{equation*}
where the state space $\mathcal{X} = \{(\dot{\theta},\theta)\mid 2-(\frac{\dot{\theta}}{6})^4-(\frac{\theta}{1.5\pi})^4\geq 0\}$, the admissible set $\mathcal{S} = \{(\dot{\theta},\theta)\mid 1-(\frac{\dot{\theta}}{6})^4-(\frac{\theta}{1.5\pi})^4\geq 0\}$, and the control set $\mathcal{U} = \{u\mid -80 \leq u\leq 80\}$. The terms $r_k$ and $q_k$ jointly model time-varying effects in the system dynamics. Specifically, $r_{k}$ represents a piecewise-constant parameter variation: every 50 time steps, $r_{k}$ is independently resampled from a uniform distribution over $[-5,5]$ and kept constant until the next update. In addition, $q_{k}$ captures a drifting bias, which increases by $0.01$ every $50$ time steps, with $q_{0} = 0$. The MPC task objective $J_\text{task}$ encourages the magnitude of the state $\dot{\theta}_{k}$ to be as large as possible.

\noindent\textbf{MountainCar.} Consider the \textit{MountainCarContinuous} benchmark adapted from GYM \cite{towers2024gymnasium},
\begin{equation*}
    \begin{cases}
    v_{k+1}=v_{k} + (0.0015 + m_{k})u_{k} - 0.0025\cos{(3p_{k})} + r_{k},\\
    p_{k+1}=p_{k} + v_{k+1} + q_{k},
    \end{cases}
\end{equation*}
where the state space $\mathcal{X} = \{(v,p)\mid 2-(\frac{v}{0.05})^4-(\frac{p+0.3}{0.7})^4\geq 0\}$, the admissible set $\mathcal{S} = \{(v,p)\mid 1-(\frac{v}{0.05})^4-(\frac{p+0.3}{0.7})^4\geq 0\}$, and the control set $\mathcal{U} = \{u\mid -200 \leq u\leq 200\}$. Here, $m_{k}$ gradually decreases over time. Specifically, every five time steps, $m_{k}$ decreases by $5\times 10^{-8}$. In addition, every five time steps, the parameter variations $r_{k}$ and $q_{k}$ are independently resampled from the distribution $0.004\times \text{Beta}(0.1,0.1) -0.002$ and then held constant until the next update. The MPC task objective $J_\text{task}$ encourages the magnitude of the state $v_{k}$ to be as large as possible.

\noindent\textbf{Lorenz.} Consider a 6-dimensional Lorenz model adapted from \cite{lorenz1996predictability},
\begin{equation*}
    \begin{cases}
    \bm{x}_{k+1}[0]=\bm{x}_{k}[0] + 0.01((\bm{x}_{k}[1] - \bm{x}_{k}[4])\bm{x}_{k}[5] - \bm{x}_{k}[0] + \bm{u}_{k}[0] + p_k),\\
    \bm{x}_{k+1}[1]=\bm{x}_{k}[1] + 0.01((\bm{x}_{k}[2] - \bm{x}_{k}[5])\bm{x}_{k}[0] - \bm{x}_{k}[1] + \bm{u}_{k}[1]),\\
    \bm{x}_{k+1}[2]=\bm{x}_{k}[2] + 0.01((\bm{x}_{k}[3] - \bm{x}_{k}[0])\bm{x}_{k}[1] - \bm{x}_{k}[2] + \bm{u}_{k}[2]),\\
    \bm{x}_{k+1}[3]=\bm{x}_{k}[3] + 0.01((\bm{x}_{k}[4] - \bm{x}_{k}[1])\bm{x}_{k}[2] - \bm{x}_{k}[3] + \bm{u}_{k}[3]),\\
    \bm{x}_{k+1}[4]=\bm{x}_{k}[4] + 0.01((\bm{x}_{k}[5] - \bm{x}_{k}[2])\bm{x}_{k}[3] - \bm{x}_{k}[4] + \bm{u}_{k}[4]),\\
    \bm{x}_{k+1}[5]=\bm{x}_{k}[5] + 0.01((\bm{x}_{k}[0] - \bm{x}_{k}[3])\bm{x}_{k}[4] - \bm{x}_{k}[5] + \bm{u}_{k}[5]),
\end{cases}
\end{equation*}
where $\bm{x}$ and $\bm{u}$ are six-dimensional vectors, and $\bm{x}_k[i]$, for $i = 0,1,\ldots,5$, denotes the $i$-th component of the state $\bm{x}$ at time step $k$. The state space is $\mathcal{X} = \{\bm{x}\mid 12-\sum_{i=0}^5 \bm{x}^2[i]\geq 0\}$, the admissible set $\mathcal{S} = \{\bm{x}\mid 1-\sum_{i=0}^5 \bm{x}^2[i]\geq 0\}$, and the control set $\mathcal{U} = \{\bm{u}\mid \bigwedge_{i=0}^5 -100 \leq \bm{u}[i]\leq 100\}$. Specifically, every five time steps, the parameter variations $p_{k}$ is independently resampled from a uniform distribution over $[-5,5]$ and kept constant until the next update. The MPC task objective $J_\text{task}$ encourages the value of the state $\bm{x}[0]$ to be as small as possible.

\paragraph{Implementation Details:}
We employ a physics-informed offline learning procedure to initialize $\widehat{h}$ and $\widehat{\bm f}$. Specifically, we obtain the initial surrogate models by interpolation-based fitting over a compact dictionary of standard basis functions, such as polynomial and trigonometric features. These models serve as warm-start surrogates in Algorithm~1. To account for unmodeled time-varying effects, we further augment the nominal predictors with lightweight residual corrections learned from online one-step prediction errors \cite{ljung1983theory}. The residual components are updated only when the corresponding prediction error exceeds a prescribed threshold, following the sparse online update strategy described in Appendix~B. For example, in the Vanderpol benchmark, $\hat{h}$ is updated when $|h-\widehat{h}| > 0.01$.

In the experiments reported in Figure~\ref{fig:results} and Table~\ref{tab:results}, the ACI learning rate is set to $\gamma=0.01$, and the quantile estimation uses a sliding window of the 200 most recent scores (see Appendix~B). The sensitivity parameter $\beta$ is set to $50$ for Vanderpol and $100$ for Pendulum, MountainCar, and Lorenz. Across all experiments, the violation cost is defined as $\rho(\bm{x}) = -\frac{h(\bm{x})}{M_h}$, where $M_h$ normalizes the cost to satisfy Assumption \ref{assumption:cost}. The horizon in the MPC optimization \eqref{eq:robust_constraint} is set to $N=20$.

\subsection{Sensitivity Analysis}
\label{sec:app_sensitivity}

In this section, we further analyze the algorithm's sensitivity to its key hyperparameters, $\gamma$, $\beta$, and the sliding window size $W$, to verify the robustness of the performance gains and to confirm that these parameters behave as theoretically expected. We analyze each parameter in a representative setting where its influence is most evident for clarity: $\gamma$ and $\beta$ on MountainCar with $\alpha =0.1$, and the sliding window size on Pendulum with $\alpha=0.3$.
\begin{table*}[!h]
\centering
\caption{Performance comparison under different values of $\gamma$}
\small
\setlength{\tabcolsep}{1.5mm}{
    \begin{tabular}{*{7}{c}}
    \toprule
    \multirow{2}{*}{$\gamma$} & \multicolumn{3}{c}{Standard ACI} & \multicolumn{3}{c}{Ours} \\\cmidrule(lr){2-4} \cmidrule(lr){5-7}
    & $\bar{J}_{\text{task}}$ & ${J}$ & $V$ & $\bar{J}_{\text{task}}$ & ${J}$ & $V$\\ \midrule
    0.2 & $-2.03\times10^{-3}$ & $5.00\!\times\!10^{-3}$ & $4.92\%$ & $-1.84\times10^{-3}$ & $0.83\!\times\!10^{-3}$ & $0.84\%$  \\
    0.1 & $-2.04\times10^{-3}$ & $4.58\!\times\!10^{-3}$ & $5.45\%$ & $-1.85\times10^{-3}$ & $0.83\!\times\!10^{-3}$ & $1.05\%$  \\
    0.01 & $-2.06\times10^{-3}$ & $3.76\!\times\!10^{-3}$ & $5.98\%$ & $-1.91\times10^{-3}$ & $0.79\!\times\!10^{-3}$ & $1.53\%$  \\
    0.001 & $-2.05\times10^{-3}$ & $3.53\!\times\!10^{-3}$ & $6.11\%$ & $-1.98\times10^{-3}$ & $0.79\!\times\!10^{-3}$ & $2.38\%$  \\
\bottomrule
\end{tabular}}
\label{tab:gamma}
\end{table*}
\begin{figure}[h]
    \centering
\subfigure[Performance Variation Under Different Values of $\beta$]{    
    \includegraphics[width=0.46\linewidth]{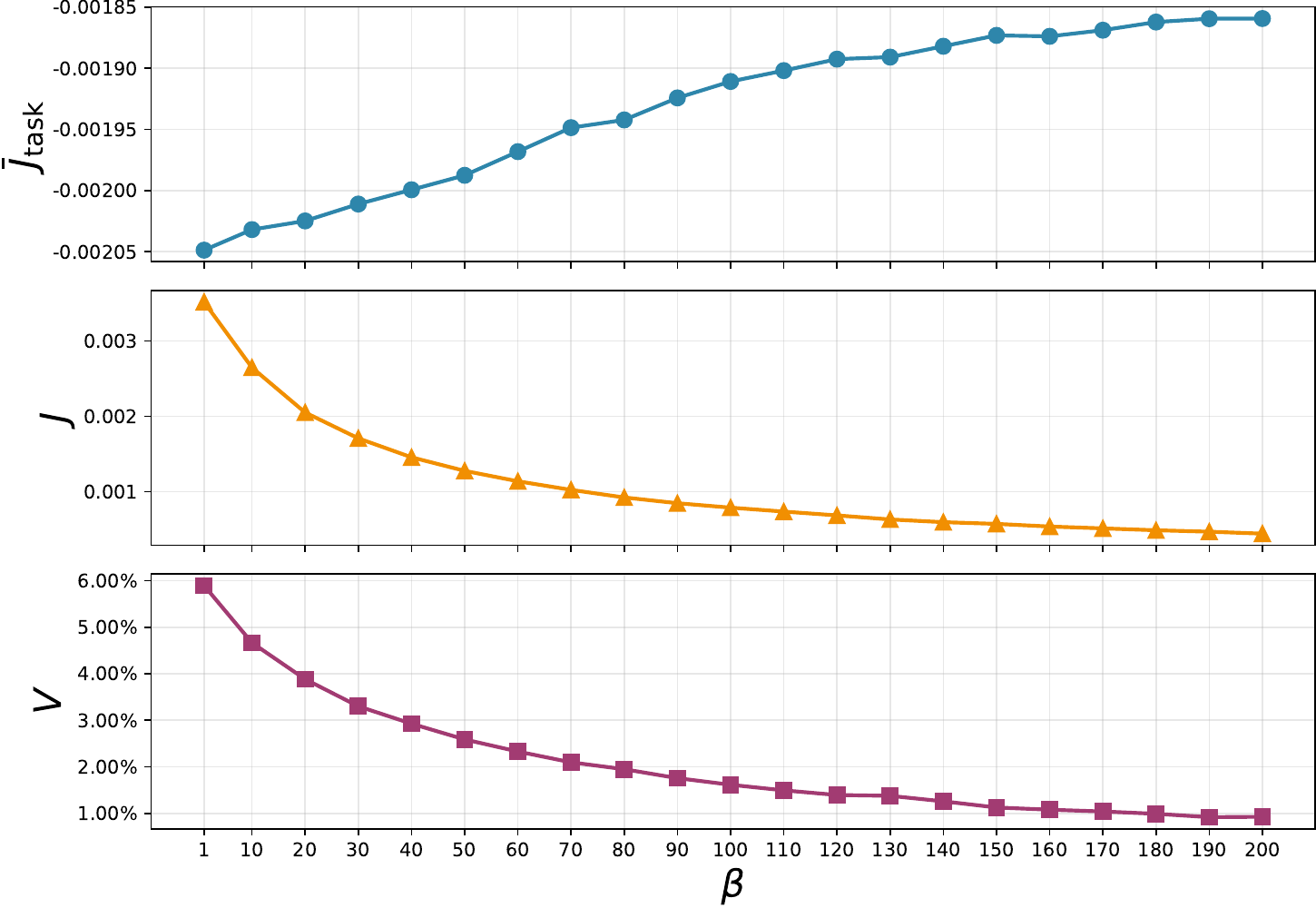}
    \label{fig:beta}
    }
    \subfigure[Comparison of $\hat{Q}_{k+1}$ Under Different Sliding Window Size $W$]{ 
    \includegraphics[width=0.46\linewidth]{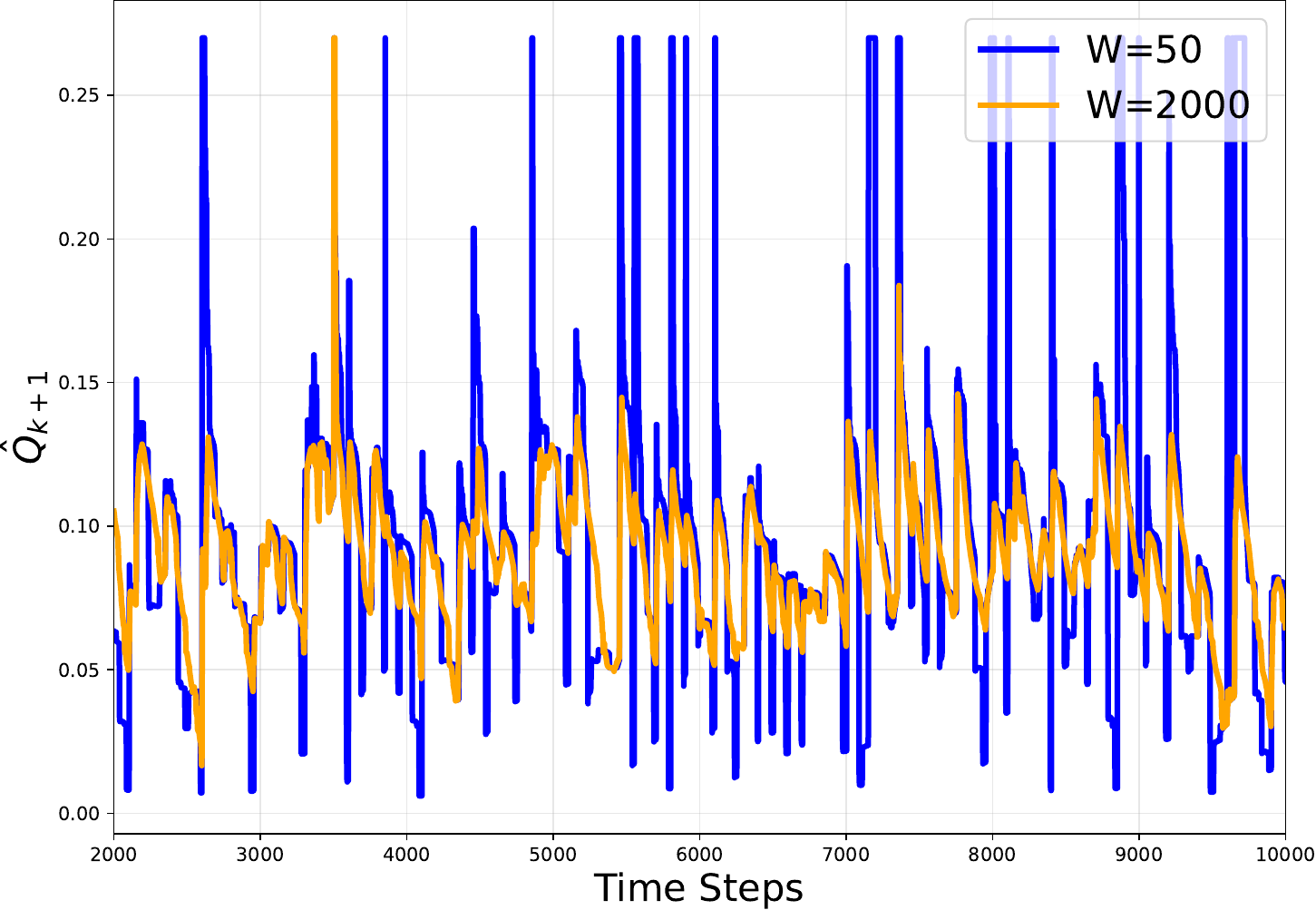}
    \label{fig:size}
    }
    \caption{Effect of Sensitivity $\beta$ and $W$}
\end{figure}


\noindent \textbf{Effect of Learning Rate $\gamma$:} Table \ref{tab:gamma} shows that our method's superiority over standard ACI is robust across different learning rates. For all tested values of $\gamma$, our approach maintains a significant advantage in reducing both cost $J$ and frequency $V$. The choice of $\gamma$ itself involves a trade-off related to the system's temporal dynamics, and as noted in \cite{gibbs2024conformal}, it can also be adapted online. We will investigate this in the future work.

\noindent \textbf{Effect of Sensitivity $\beta$:} Fig. \ref{fig:beta} shows that the parameter $\beta$ acts as a tuning knob for runtime assurance. As $\beta$ increases, the algorithm penalizes violations more heavily, making it more conservative. This directly improves assurance, causing both the average cost $J$ and violation frequency $V$ to decrease. In exchange, the task cost $J_{task}$ slightly increases, which is expected since our benchmarks are designed to oppose assurance. This confirms that $\beta$ provides a clear and predictable way to trade task performance for a higher level of assurance.


\noindent \textbf{Effect of the sliding window size:} 
The sliding window size, $W$, is a trade-off between stability and responsiveness. As shown in Fig. \ref{fig:size}, a small window ($W=50$) causes a volatile quantile $\hat{Q}_{k+1}$ that closely tracks recent events, while a large window ($W=2000$) provides a much smoother, more stable estimate. Despite this difference in behavior, Table \ref{tab:size_window} shows that our method's superior assurance performance is robust. Across all tested window sizes, our approach consistently outperforms standard ACI by achieving lower violation costs ($J$) and violation frequencies ($V$). This is a key practical advantage, as it shows our algorithm's effectiveness does not depend on carefully tuning this parameter.
\begin{table*}[!h]
\centering
\caption{Performance comparison under different size of the sliding window}
\small
\setlength{\tabcolsep}{1.5mm}{
    \begin{tabular}{*{7}{c}}
    \toprule
    \multirow{2}{*}{W} & \multicolumn{3}{c}{Standard ACI} & \multicolumn{3}{c}{Ours} \\\cmidrule(lr){2-4} \cmidrule(lr){5-7}
    & $\bar{J}_{\text{task}}$ & ${J}$ & $V$ & $\bar{J}_{\text{task}}$ & ${J}$ & $V$\\ \midrule
    50 & $-28.93$ & $2.71\!\times\!10^{-3}$ & $14.54\%$ & $-28.82$ & $1.09\!\times\!10^{-3}$ & $9.28\%$  \\
    500 & $-28.82$ & $3.82\!\times\!10^{-3}$ & $14.89\%$ & $-28.85$ & $1.30\!\times\!10^{-3}$ & $8.23\%$  \\
    2000 & $-28.86$ & $3.67\!\times\!10^{-3}$ & $14.43\%$ & $-28.81$ & $1.23\!\times\!10^{-3}$ & $8.11\%$  \\
\bottomrule
\end{tabular}}
\label{tab:size_window}
\end{table*}

\end{document}